\documentclass[a4paper,12pt]{article}
\usepackage[bottom=4.5cm,top=3.8cm,left=2.2cm,right=2.2cm]{geometry}
\linespread{1.12}
\setlength{\footskip}{2.8cm}
\setlength{\floatsep}{10pt plus 2pt minus 2pt}
\setlength{\textfloatsep}{11pt plus 2pt minus 2pt}
\setlength{\intextsep}{11pt plus 2pt minus 2pt}

\let\OLDthebibliography\thebibliography
\renewcommand\thebibliography[1]{
  \OLDthebibliography{#1}
  \setlength{\parskip}{1pt}
  \setlength{\itemsep}{2pt plus 0.3ex}
}

\usepackage{graphicx}
\usepackage{caption}
\usepackage{subcaption}	
\usepackage[shortlabels]{enumitem}

\usepackage{amsthm}
\usepackage{amsmath,amssymb}
\usepackage{mathtools}
\usepackage{chngcntr}
\usepackage{url}

\usepackage{algorithmicx}
\usepackage{algpseudocode}
\usepackage{algorithm}


\newcommand{\gap}{{\mathop{\rm gap}}}

\newtheorem{prop}{Proposition}[section]
\newtheorem{lemma}{Lemma}[section]

\newtheorem{theorem}{Theorem}[section]
\newtheorem{corollary}{Corollary}[section]

\newtheorem{observation}{Observation}[section]

\newtheorem{property}{Property}[section]

\begin{document}

\begin{center}
{\Large Multimodal Transportation with Ridesharing of Personal Vehicles}
\vskip 0.2in
Qian-Ping Gu$^1$, Jiajian Leo Liang$^1$

$^1$School of Computing Science, Simon Fraser University, Canada\\
qgu@sfu.ca, leo\_liang@sfu.ca
\end{center}
\vspace{2mm}

\noindent \textbf{Abstract:}
Many public transportation systems are unable to keep up with growing passenger demand as the population grows in urban areas.
The slow or lack of improvements for public transportation pushes people to use private transportation modes, such as carpooling and ridesharing.
However, the occupancy rate of personal vehicles has been dropping in many cities.
In this paper, we describe a centralized transit system that integrates public transit and ridesharing, which matches drivers and transit riders such that the riders would result in shorter travel time using both transit and ridesharing. The optimization goal of the system is to assign as many riders to drivers as possible for ridesharing.
We give an exact approach and approximation algorithms to achieve the optimization goal.
As a case study, we conduct an extensive computational study to show the effectiveness of the transit system for different approximation algorithms, based on the real-world traffic data in Chicago City; the data sets include both public transit and ridesharing trip information.
The experiment results show that our system is able to assign more than 60\% of riders to drivers, leading to a substantial increase in occupancy rate of personal vehicles and reducing riders' travel time.

\vspace{2mm}

\noindent \textbf{Keywords:} Multimodal transportation, ridesharing, approximation algorithms, computational study

\section{Introduction} \label{sec-intro}
As the population grows in urban areas, commuting between and within large cities is time-consuming and resource-demanding.
Due to growing passenger demand, the number of vehicles on the road for both public and private transportation has increased to handle the demand.
Public transportation systems are unable to keep up with the demand in terms of service quality.
This pushes people to use personal vehicles for work commute.
In the United States, personal vehicles are the main transportation mode~\cite{CSS20}.
However, the occupancy rate of personal vehicles in the U.S. is 1.6 persons per vehicle in 2011~\cite{USDT-G,USDTFHA-S} (and decreased to 1.5 persons per vehicle in 2017~\cite{CSS20}), which can be a major cause for congestion and pollution.
This is the reason municipal governments encourage the use of public transit;
the major drawback of public transit is the inconvenience of last mile and/or first mile transportation compared to personal vehicles~\cite{TS14-W}.
With the increasing popularity in ridesharing/ridehailing service, there may be potential in integrating private and public transportation.
From the research report of~\cite{TRB16-M}, it is recommended that public transit agencies should build on mobility innovations to allow public-private engagement in ridesharing because the use of shared modes increases the likelihood of using public transit.
As pointed out by Ma et al.~\cite{TRELTR19-M}, some basic form of collaboration between MoD (mobility-on-demand) services and public transit already exists (for first and last mile transportation).
There is an increasing interest for collaboration between private companies and public sector entities~\cite{SMT18-R}.

The spareness of transit networks usually is the main cause of the inconvenience in public transit.
Such transit networks have infrequent transit schedule and cause customers to have multiple transfers.
In this paper, we investigate the potential effectiveness of integrating public transit with ridesharing to increase ridership in such sparse transit networks and reduce traffic congestion for work commute (not very short trips).
For example, people who drive their vehicles to work can pick-up \textit{riders}, who use public transit regularly, and drop-off them at some transit stops, and those riders can take public transit to their destinations.
In this way, riders are presented with a cheaper alternative than ridesharing for the entire trip, and it is more convenient than using public transit only.
The transit system also gets a higher ridership, which matches the recommendation of~\cite{TRB16-M} for a more sustainable transportation system.
Our research focuses on a centralized system that is capable of matching drivers and riders satisfying their trips' requirements while achieving some optimization goal; the requirements of a trip may include an origin and a destination, time constraints, capacity of a vehicle, and so on.
When a rider is assigned a driver, we call this \emph{ridesharing route}, and it is compared with the fastest \emph{public transit route} for this rider which uses only public transit. 
If the ridesharing route is faster than the public transit route, the ridesharing route is provided to both the rider and driver.
To increase the number of rider participants, our system-wide optimization goal is to maximize the number of riders, each of whom is assigned a ridesharing route. We call this the \emph{maximization problem} (formal definition in Section~\ref{sec-preliminary}).

In the literature, there are many papers about standalone ridesharing/carpooling, from theoretical to empirical studies (e.g.,~\cite{TRBM11-A,PNAS17-AM,GLZ21,TRBM20-X}).
For literature reviews on ridesharing, readers are referred to~\cite{EJOR12-A,TRBM13-F,TRBM19-M,SS20-T}.
On the other hand, there are only few papers study the integration of public transit with dynamic ridesharing.
Aissat and Varone~\cite{ICEIS15-A} proposed an approach in which a public transit route for each rider is given, and their algorithm tries to substitute part(s) of every rider's route with ridesharing.
Any part of a rider's original transit route is replaced only if ridesharing substitution is better than the original part.
Their algorithm finds the best route for each rider in first-come first-serve basis (system-wide optimization goal is not considered) and is computational intensive.
Huang et al.~\cite{TITS19-H} presented a more robust approach, compared to~\cite{ICEIS15-A}, 
by combining two networks $N, N'$ (representing the public transit and ridesharing network respectively) into one single routable graph $G$.
The graph $G$ uses the \emph{time-expanded model} to maintain the information about all public vehicles schedule, riders' and drivers' origins, destinations and time constraints.
In general, a \emph{stop node} in $G$ represents a public vehicle's/driver's stop location, and a \emph{time node} represents time events of this vehicle/driver at this stop.
An edge between two nodes implies possible transfer for riders from one vehicle to the other (i.e., the departure time of a vehicle is after the arrival of the other); this also implies a rider can be pick-up/dropped-off from/at a public stop within time constraints.
The authors apply this idea to create the ridesharing network graph $N'$ and connect the two networks $N, N'$ by creating edges between them whenever a rider can be pick-up/dropped-off from/at a public stop within time constraints.
For each rider travel query, a shortest path is found on $G$.
Their approach is also first-come first-serve basis and does not achieve system-wide optimization goal.

Ma~\cite{EEEIC17-M} and Stiglic et al.~\cite{COR18-S} proposed models to integrate ridesharing and public transit as graph matching problems to achieve system-wide optimization goals.
Algorithm presented in~\cite{EEEIC17-M} uses the shareability graph (RV-graph)~\cite{PNAS14-S} and the extension of RV-graph, called RTV-graph~\cite{PNAS17-AM}.
In fact, the approach used by Stiglic et al.~\cite{COR18-S} is similar, except~\cite{COR18-S} supports more rideshare match types.
A set of driver and rider trip announcements and a public transit network with a fixed cyclic timetable are given.
For a pre-transit rideshare match, a set of riders is assigned to a driver, and the driver pick-ups each rider by traveling to each rider's origin, then drop-of them at some public transit stops.
For a post-transit rideshare match, a driver picks-up a set of riders at a public transit stop and then transport each rider one by one to their destinations.
A set of riders can only be assigned to a driver if certain constraints are met, such as capacity of the driver's vehicle and the travel time constraints of the driver and riders.
Each of the driver and rider is represented by a node. There is an edge between a driver and a rider if the rider can be served by the driver.
If a group of riders can be served by a driver, a node containing the group is created, and an edge between the driver and the group is also created.
From this graph, a matching problem is formulated as an integer linear program (ILP) and solved by standard branch and bound (CPLEX).
The optimization goal in~\cite{EEEIC17-M} is to minimize cost related to waiting time and travel time, but ridesharing routes are not guarantee to be better than transit route.
Although the optimization goal in~\cite{COR18-S} aligns with ours, there are some limitations in their approach:
they limit at most two riders for each rideshare match, each rider must travel to the transit stop that is closest to the rider's destination, and more importantly, ridesharing routes assigned to riders can be longer than public transit routes.

In this paper, we use a similar model as in~\cite{EEEIC17-M,COR18-S}. We extend the work in~\cite{COR18-S} to eliminate the limitations described above and give approximation algorithms for the optimization problem to ensure solution quality.
Our discrete algorithms allow to control the trade-off between quality and computational time.
We conduct a numerical study based on real-life data in Chicago City.
Our main contributions are summarized as follows:
\begin{enumerate}
\setlength\itemsep{0em}
\item We give an exact algorithm approach (an ILP formulation based on a hypergraph representation) for integrating public transit and ridesharing.
\item We prove our maximization problem is NP-hard and give a 2-approximation algorithm for the problem. We show that previous $O(k)$-approximation algorithms~\cite{SWAT00-B,SODA99-C} for the $k$-set packing problem are 2-approximation algorithms for our maximization problem. Our algorithm is more time and space efficient than previous algorithms.
\item As a case study, we conduct an extensive numerical study based on real-life data in Chicago City to evaluate the potential of having an integrated transit system and the effectiveness of different approximation algorithms.
\end{enumerate}
The rest of the paper is organized as follows.
In Section~\ref{sec-preliminary}, we give the preliminaries of the paper, describe a centralized system that integrates public transit and ridesharing, and define the maximization problem.
In Section~\ref{sec-exact}, we describe our exact algorithm approach. We then propose approximation algorithms in Section~\ref{sec-approximate}.
We discuss our numerical experiments and results in Section~\ref{sec-experiment}.
Finally, Section~\ref{sec-conclusion} concludes the paper.

\section{Problem definition and preliminaries} \label{sec-preliminary}
In the problem \textit{multimodal transportation with ridesharing} (MTR), we have a centralized system, and for every fixed time interval, the system receives a set $\mathcal{A} = D \cup R$ of trips with $D \cap R = \emptyset$, where $D$ is the set of driver trips and $R$ is the set of rider trips.
Each trip is expressed by an integer label $i$ and consists of an individual, a vehicle (for driver trip) and some requirements.
A connected public transit network with a fixed timetable $T$ is given.
We assume that for any source $o$ and destination $d$ in the public transit network, $T$ gives the fastest travel time from $o$ to $d$.
A \emph{ridesharing route} $\pi_i$ for a rider $i \in R$ is a travel plan using a combination of public transportation and ridesharing to reach $i$'s destination satisfying $i$'s requirements, whereas a \emph{public transit route} $\hat{\pi}_i$ for a rider $i$ is a travel plan using only public transportation.
The multimodal transportation with ridesharing problem asks to provide at least one feasible route ($\pi_i$ or $\hat{\pi}_i$) for every rider $i \in R$. We denote an instance of multimodal transportation with ridesharing problem by $(N,\mathcal{A},T)$, where $N$ is an edge-weighted directed graph (network) for both private and public transportation.
We call a public transit station or stop just \emph{station}.
The terms rider and passenger are used interchangeably (although passenger emphasizes a rider has been provided with a ridesharing route).

The requirements of each trip $i$ in $\mathcal{A}$ are specified by $i$'s parameters submitted by the individual.
The parameters of a trip $i$ contain an origin location $o_i$, a destination location $d_i$, an earliest departure time $\alpha_i$, a latest arrival time $\beta_i$ and a maximum trip time $\gamma_i$.
A driver trip $i$ also contains a capacity $n_i$ of the vehicle, a limit $\delta_i$ on the number of stops a driver wants to make to pick-up/drop-off passengers, and an optional path to reach its destination.
The maximum trip time $\gamma_i$ of a driver $i$ includes a travel time from $o_i$ to $d_i$ and a detour time limit $i$ can spend for offering ridesharing service.
A rider trip $i$ also contains an acceptance rate $\theta_i$ for a ridesharing route $\pi_i$, that is, $\pi_i$ is given to rider $i$ if $t(\pi_i) \leq \theta_i \cdot t(\hat{\pi}_i)$ for every public transit route $\hat{\pi}_i$ and $0 < \theta_i \leq 1$, where $t(\cdot)$ is the travel time.
Such a route $\pi_i$ is called an \emph{acceptable ridesharing route} (acceptable route for brevity).
For example, suppose the best public transit route $\hat{\pi}_i$ takes 100 minutes for $i$ and $\theta_i = 0.9$. An acceptable route $\pi_i$ implies that $t(\pi_i) \leq \theta_i \cdot t(\hat{\pi}_i) = 90$ minutes.
We consider two match types for practical reasons.
\begin{itemize}
\setlength\itemsep{0em}
\item \textbf{Type 1 (rideshare-transit)}: a driver may make multiple stops to pick-up different passengers, but makes only one stop to drop-off all passengers. In this case, the \emph{pick-up locations} are the passengers' origin locations, and the \emph{drop-off location} is a public station.
\item \textbf{Type 2 (transit-rideshare)}: a driver makes only one stop to pick-up passengers and may make multiple stops to drop-off all passengers. In this case, the \emph{pick-up location} is a public station and the \emph{drop-off locations} are the passengers' destination locations.
\end{itemize}
Riders and drivers specify one of the match types to participate in; they are allowed to choose both in hope to increase the chance being selected, but the system will assign them only one of the match types such that the optimization goal of the MTR problem is achieved, which is to assign acceptable routes to as many riders as possible.
Formally, the \textbf{maximization problem} we consider is to maximize the number of passengers, each of whom is assigned an acceptable route $\pi_i$ for every $i \in R$.

For a driver $i$ and a set $J \subseteq R$ of riders, $\sigma(i) = \{i\} \cup J$ is called a {\em feasible match} if the routes for all trips of $\sigma(i)$ satisfy the requirements (constraints) specified by the parameters of the trips collectively as listed below (a summary of notation and constraints can be found in Section~\ref{sec-compute-matches}).
\begin{enumerate}
\setlength\itemsep{0em}
\item \textit{Ridesharing route constraint}: for $J=\{j_1,\ldots,j_k\}$, there is a path $(o_i,o_{j_1},\dots,o_{j_k}$, $s,d_i)$ in $N$, where $s$ is the drop-off location for Type 1 match;
or there is a path $(o_i,s,d_{j_1},...,d_{j_k},d_i)$ in $N$, where $s$ is the pick-up location for Type 2 match.
\item \textit{Capacity constraint}: $1\leq |J| \leq n_i$.
\item \textit{Acceptable constraint}: each passenger $j \in J$ is given an acceptable route $\pi_j$ offered by driver $i$.
\item \textit{Travel time constraint}: each trip $j \in \sigma(i)$ departs from $o_j$ no earlier than $\alpha_j$, arrives at $d_j$ no later than $\beta_j$, and the total travel duration of $j$ is at most $\gamma_j$. 
\item \textit{Stop constraint}: the number of unique locations visited by driver $i$ to pick-up (for Type 1) or drop-off (for Type 2) all passengers of $\sigma(i)$ is at most $\delta_i$.
\end{enumerate}
Two feasible matches $\sigma(i), \sigma(i')$ are \emph{disjoint} if $\sigma(i) \cap \sigma(i') = \emptyset$.
Then, the maximization problem considered is to find a set of pairwise disjoint feasible matches such that the number of passengers included in the feasible matches is maximized.

Intuitively, a rideshare-transit (Type 1) feasible match $\sigma(i)$ is that all passengers in $\sigma(i)$ are picked-up at their origins and dropped-off at a station, and then $i$ drives to destination $d_i$ while each passenger $j$ of $\sigma(i)$ takes transit to destination $d_j$.
A transit-rideshare (Type 2) feasible match $\sigma(i)$ is that all passengers in $\sigma(i)$ are picked-up at a station and dropped-off at their destinations, and then $i$ drives to destination $d_i$ after dropping the last passenger.
We give algorithms to find pairwise disjoint feasible matches to maximize the number of passengers included in the matches.
We describe our algorithms for Type 1 only. Algorithms for Type 2 can be described with the constraints on the drop-off location and pick-up location of a driver exchanged, and we omit the description.
Further, it is not difficult to extend to other match types, such as rideshare only and park-and-ride, as described in~\cite{COR18-S}.

\section{Exact algorithm} \label{sec-exact}
An exact algorithm for the maximization problem is presented in this section, which is similar to the matching approach described in~\cite{PNAS17-AM,PNAS14-S} for ridesharing and in~\cite{EEEIC17-M,COR18-S} for MTR.

\subsection{Integer program formulation} \label{sec-exact-IP}
The exact algorithm is summarized as follows.
First, we compute all feasible matches for each driver $i$. Then, we create a bipartite (hyper)graph $H(D,R,E)$, where $D(H)$ is the set of drivers, and $R(H)$ is the set of riders.
There is a hyperedge $e = (i, J)$ in $E(H)$ between $i \in D(H)$ and a non-empty subset $J \subseteq R(H)$ if $\{i\} \cup J$ is a feasible match, denoted by $\sigma_J(i)$, for driver $i$.
\begin{figure}[!htbp]
\centering
\includegraphics[width=.66\linewidth]{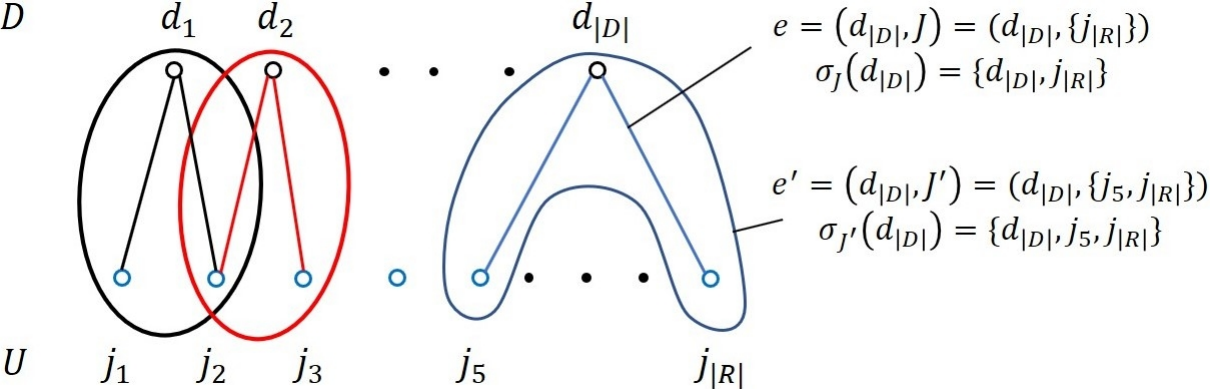}
\caption{A bipartite hypergraph for all possible matches of an instance $(N,\mathcal{A},T)$.}
\label{fig-hypergraph}
\end{figure}
An example is given in Figure~\ref{fig-hypergraph}.
Any driver $i$ and rider $j$ with no feasible match is removed from $D(H)$ and $R(H)$ respectively, namely, no isolated vertex.
For an edge $e=(i,J)$, let $A(e) = \{i\} \cup J$ and $p(e) = |J|$ be the number of riders represented by $e$.
For a trip $j \in \mathcal{A}$, define $E_j = \{e \in E \mid j \in A(e)\}$ to be the set of edges in $E$ associated with $j$.
To solve the maximization problem, we give an integer program (ILP) formulation:
\begin{alignat}{4}
 & \text{maximize }   &        &\sum_{e \in E(H)} p(e) \cdot x_{e} & \qquad \label{obj-1}\\
 & \text{subject to } & \qquad &\sum_{e \in E_j} x_{e} \leq 1,   & & \forall \text{ } j \in \mathcal{A} \label{constraint-1}\\
 &                    &        &x_{e} \in \{0,1\}, & &\forall \text{ } e \in E(H) \label{constraint-2}
\end{alignat}
The binary variable $x_e$ indicates whether the edge $e = (i, J)$ is in the solution ($x_e = 1$) or not ($x_e = 0$).
If $x_e = 1$, it means that all passengers in $J$ are served by $i$.
Inequality (2) in the ILP formulation guarantees that each driver serves at most one feasible set of passengers and each passenger is served by one driver.
Note that the ILP (\ref{obj-1})-(\ref{constraint-2}) is similar to a set packing formulation.
An advantage of this ILP formulation is that the number of constraints is substantially decreased, compared to traditional ridesharing formulation.

\begin{observation}
A match $\sigma(i)$ for any driver $i \in D$ is feasible if and only if for every subset $P$ of $\sigma(i)\setminus\{i\}$, the match between $i$ and $P$ is feasible~\cite{TRBM15-S}.
\label{obs-1}
\end{observation}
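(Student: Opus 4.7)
The plan is to prove the biconditional in Observation~\ref{obs-1} one direction at a time. One direction --- that feasibility of $\{i\} \cup P$ for every $P \subseteq \sigma(i)\setminus\{i\}$ implies feasibility of $\sigma(i)$ --- is essentially a tautology: take $P = \sigma(i)\setminus\{i\}$, so that $\{i\} \cup P = \sigma(i)$ is among the sub-matches assumed to be feasible. No further work is needed there.

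The substantive direction is the forward one: assuming $\sigma(i)$ is a feasible (say Type~1) match, show that for any $P \subseteq \sigma(i)\setminus\{i\}$ the smaller match $\{i\}\cup P$ remains feasible. I would do this by verifying the four constraint families listed in Section~\ref{sec-preliminary} in turn. The capacity constraint and the stop constraint are immediate, because passing from $\sigma(i)\setminus\{i\}$ to $P$ can only decrease the number of served passengers and the number of distinct pick-up locations visited. For the travel time and acceptance constraints, I would take the witnessing feasible route $o_i, o_{j_1}, \ldots, o_{j_k}, s, d_i$ for $\sigma(i)$ and consider the route that visits, in the same relative order, only the origins of riders in $P$, then the same drop-off station $s$, then $d_i$. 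The station $s$ is still a valid drop-off for every $j \in P$ because it was valid for the superset, so each remaining rider's transit leg $s \to d_j$ is unaffected. Appealing to the triangle inequality on the shortest-path travel times stored in $N$, omitting any intermediate pick-up $o_{j_y}$ can only shrink (or preserve) the time between the consecutive retained nodes. Summing, the driver's total trip time and each remaining passenger's in-vehicle time $o_j \to s$ can only decrease, which preserves both the driver's travel-time/detour limit and the acceptance inequality $t(\pi_j) \le \theta_j \cdot t(\hat{\pi}_j)$.

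The main obstacle is making the triangle-inequality step fully rigorous: one has to be explicit that the edge weights used to evaluate travel times in $N$ are shortest-path distances (or at least satisfy subadditivity along the specific sequence of stops in question), so that deleting a pick-up really does produce a no-longer route. Once this is pinned down, the argument is a constraint-by-constraint monotonicity check, and the Type~2 (transit-rideshare) case follows by the same reasoning applied to the mirrored route in which $s$ is the common pick-up location and passenger destinations are visited in sequence before $d_i$; I would simply note this symmetry rather than repeat the verification.
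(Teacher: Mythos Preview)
The paper does not actually prove Observation~\ref{obs-1}: it is stated as a fact and attributed to~\cite{TRBM15-S}, with no argument supplied. So there is no ``paper's own proof'' to compare against; your proposal fills a gap the authors deliberately left to the cited reference.

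As an independent argument your outline is sound. The backward direction is indeed trivial. For the forward direction, your constraint-by-constraint monotonicity check is the natural approach, and the key step---dropping intermediate pick-ups from the witnessing route and invoking the triangle inequality for the shortest-path travel times $t(\cdot,\cdot)$---is exactly right, since in a weighted digraph shortest-path distances are automatically subadditive. One small point worth making explicit that you do not mention: when pick-ups are removed the driver may now arrive at a remaining origin $o_{j_y}$ \emph{before} $\alpha_{j_y}$, so you should either (i) note that the driver can delay departure (the paper's $\alpha_i(l_p)$ mechanism does exactly this) so that no waiting is incurred and the travel-duration bound $\gamma_i$ is still met, or (ii) observe that the model explicitly disregards waiting time. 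Either remark closes the only loose end in your argument.
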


From Observation~\ref{obs-1}, it is not difficult to see that the following results hold.
\begin{prop}
Let $i_1, i_2,\ldots, i_j$ be a set of drivers in $D$ and $P$ be a maximal set of passengers served by $i_1, \ldots, i_j$.
There always exists a solution such that $\sigma(i_a) \cap \sigma(i_b) = \emptyset$ $(1 \leq a \neq b \leq j)$ and $\bigcup_{i_1 \leq a \leq i_j} \sigma(a) = P$.
\label{prop-1}
\end{prop}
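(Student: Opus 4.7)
The plan is to argue by a local modification starting from any assignment that witnesses the maximality of $P$. By the definition of $P$ as a maximal set of passengers served by $i_1,\ldots,i_j$, there exist feasible matches $\sigma(i_1),\ldots,\sigma(i_j)$ (possibly equal to $\{i_a\}$ for a driver that serves no passenger) whose union of passengers equals $P$. Since the drivers $i_1,\ldots,i_j$ are distinct, any overlap between $\sigma(i_a)$ and $\sigma(i_b)$ with $a\neq b$ must arise from some passenger $p\in P$ that lies in both matches.

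To remove such an overlap, I would invoke Observation~\ref{obs-1}. Suppose $p\in \sigma(i_a)\cap\sigma(i_b)$ with $a\neq b$. Then $\sigma(i_a)\setminus\{p\}$ is obtained from the feasible match $\sigma(i_a)$ by restricting its passenger set to a subset, so by Observation~\ref{obs-1} it is still feasible for driver $i_a$. After replacing $\sigma(i_a)$ by $\sigma(i_a)\setminus\{p\}$, the passenger $p$ is still served (it remains in $\sigma(i_b)$), so the union of served passengers is unchanged and still equals $P$. Applying this removal to every passenger that is currently served by more than one driver yields, in the end, pairwise disjoint feasible matches whose union is $P$.

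Termination is immediate because each removal strictly decreases the nonnegative integer $\sum_{a=1}^{j}|\sigma(i_a)|$. The only thing that actually needs care is maintaining the invariant that the union of served passengers equals $P$ throughout the procedure, which is the main obstacle; it is handled by the rule of only deleting a passenger from one of its matches when it appears in at least one other, so no passenger is ever dropped from the overall assignment. Feasibility of the modified matches is handled entirely by Observation~\ref{obs-1}, and the proposition follows.
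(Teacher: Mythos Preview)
Your argument is correct and is precisely the justification the paper has in mind: the paper does not spell out a proof but simply states that the proposition follows from Observation~\ref{obs-1}, and your local-modification argument (remove a duplicated passenger from one of its matches, appeal to Observation~\ref{obs-1} for feasibility of the reduced match, iterate) is the natural way to make that derivation explicit. One cosmetic point: the paper defines a \emph{feasible match} to contain at least one passenger, so after all deletions some $\sigma(i_a)$ may reduce to $\{i_a\}$ and technically cease to be a ``feasible match''; this is harmless for the conclusion (such a driver simply serves no one), but you may want to phrase the final assignment accordingly.
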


\begin{theorem}\label{theorem-ILP}
Given a bipartite graph $H(D,R,E)$ representing an instance of the multimodal transportation with ridesharing maximization problem, 
an optimal solution to the ILP (\ref{obj-1})-(\ref{constraint-2}) is an optimal solution to the maximization problem and vice versa.
\end{theorem}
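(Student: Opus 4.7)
The plan is to prove the theorem by establishing an objective-preserving correspondence in both directions between feasible solutions of the integer program (\ref{obj-1})--(\ref{constraint-2}) and feasible passenger-assignments for the MTR maximization problem. Since both problems are maximizations with the same underlying quantity, namely the number of passengers served, once the two feasible regions are shown to embed into each other while preserving the objective, the two optimal values coincide and an optimal ILP solution decodes directly into an optimal MTR assignment.

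First I would go from a feasible MTR assignment to a feasible ILP solution. Start with any assignment that serves a set $P \subseteq R$ of passengers. By Proposition~\ref{prop-1}, I may assume the assignment partitions $P$ among some drivers $i_1,\ldots,i_m$, so that the feasible matches $\sigma(i_a) = \{i_a\} \cup J_a$ are pairwise disjoint and $\bigcup_a J_a = P$. By the construction of $H$, each $\sigma(i_a)$ yields an edge $e_a = (i_a, J_a) \in E(H)$ with $p(e_a) = |J_a|$. Setting $x_{e_a} = 1$ and all other $x_e = 0$ satisfies (\ref{constraint-1}), because disjointness ensures every trip $j \in \mathcal{A}$ appears in at most one selected edge, and (\ref{constraint-2}) is immediate; the resulting ILP objective equals $\sum_a |J_a| = |P|$.

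Conversely, given any feasible ILP solution, let $S = \{e : x_e = 1\}$. Constraint (\ref{constraint-1}) forces the sets $A(e)$ for $e \in S$ to be pairwise disjoint in both the driver and the rider coordinates. By the definition of $E(H)$, each $e = (i, J) \in S$ corresponds to a feasible match $\sigma_J(i)$, which in turn certifies that every rider in $J$ admits an accepted ridesharing route served by $i$. Assigning, for every $e \in S$, the riders in $J$ to the driver $i$ therefore produces a valid MTR assignment whose number of served passengers is exactly $\sum_{e \in S} p(e)$, matching the ILP objective. Combining the two embeddings, the optima coincide and the ILP solves the maximization problem.

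The main obstacle is the subtlety addressed by Proposition~\ref{prop-1} in the forward direction: a priori, an optimal MTR assignment could, in principle, involve overlapping matches (a rider considered by more than one driver, or drivers whose matches share riders), which the ILP disallows via (\ref{constraint-1}). Proposition~\ref{prop-1} closes this gap by guaranteeing that a disjoint assignment of equal or larger size always exists, so restricting attention to the ILP's disjoint structure loses no value. Everything else is a direct unpacking of the definitions of $E(H)$, $A(e)$, $p(e)$, and ``feasible match,'' which already bakes in the capacity, travel-time, stop, and acceptance constraints.
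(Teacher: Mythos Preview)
Your proposal is correct and follows essentially the same approach as the paper: the paper's proof invokes constraint~(\ref{constraint-1}) to argue that any ILP solution is feasible for the MTR problem, and then appeals to Proposition~\ref{prop-1} together with the objective~(\ref{obj-1}) to conclude optimality. You have simply unpacked this terse two-sentence argument into an explicit objective-preserving correspondence in both directions, with the same use of Proposition~\ref{prop-1} to resolve the potential overlap issue in the MTR-to-ILP direction.
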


\begin{proof}
From inequality (\ref{constraint-1}) in the integer program, the solution found by the integer program is always feasible to the maximization problem.
By Proposition~\ref{prop-1} and objective function~(\ref{obj-1}), an optimal solution to the ILP (\ref{obj-1})-(\ref{constraint-2}) is an optimal solution to the maximization problem.
Obviously, an optimal solution to the maximization problem is an optimal solution to the ILP (\ref{obj-1})-(\ref{constraint-2}).
\end{proof}

\subsection{Computing feasible matches} \label{sec-compute-matches}
Let $i$ be a driver in $D$ and $n_i$ be the capacity of $i$ (maximum number of riders $i$ can serve). The maximum number of feasible matches for $i$ is $\sum_{p = 1}^{n_i} \binom{|R|}{p}$.
Assuming the capacity $n_i$ is a very small constant (which is reasonable in practice), the above summation is polynomial in $R$, that is, $O((|R|+1)^{n_i})$ (partial sums of binomial coefficients).
Let $K = \max_{i \in D} {n_i}$ be the maximum capacity among all vehicles (driver trips).
Then, in the worst case, $|E(H)| = O(|D| \cdot (|R|+1)^K)$.

We compute all feasible matches for each trip in two phases. In phase one, for each driver $i$, we find all feasible matches $\sigma(i)=\{i,j\}$ with one rider $j$.
In phase two, for each driver $i$, we compute all feasible matches $\sigma(i)=\{i,j_1,..,j_p\}$ with $p$ riders, based on the feasible matches $\sigma(i)$ with $p-1$ riders computed previously, for $p=2$ and upto the number of passengers $i$ can serve.
Before describing how to compute the feasible matches, we first introduce some notations and specify the feasible match constraints we consider.
Each trip $i \in \mathcal{A}$ is specified by the parameters $(o_i, d_i, n_i, z_i, p_i, \delta_i, \alpha_i, \beta_i, \gamma_i, \theta_i)$, where the parameters are summarized in Table~\ref{table-notation} along with other notation.
\begin{table}[!ht]
\footnotesize
\centering
   \begin{tabular}{| c | l |}
   	\hline
   	\textbf{Notation}  & \textbf{Definition}                                                    \\ 
   	$o_i$              & Origin (start location) of $i$ (a vertex in $N$)                  		\\
   	$d_i$              & Destination of $i$ (a vertex in $N$)                              		\\
   	$n_i$              & Number of seats (capacity) of $i$ available for passengers (driver only)   \\
   	$z_i$              & Maximum detour time $i$ willing to spend for offering ridesharing services (driver only)    \\
   	$p_i$       	       & An optional preferred path of $i$ from $o_i$ to $d_i$ in $N$ (driver only)     	\\
   	$\delta_i$         & Maximum number of stops $i$ willing to make to pick-up passengers for match     \\
   	                       & Type 1 and to drop-off passengers for match Type 2. \\
   	$\alpha_i$         & Earliest departure time of $i$                                     	\\
   	$\beta_i$          & Latest arrival time of $i$                                         	\\
    $\gamma_i$         & Maximum trip time for $i$                           					\\
    $\theta_i$         & Acceptance rate ($0 \leq \theta_i < 1$) for a ridesharing route $\pi_i$ (rider only)  \\
    $\pi_i$      	   & Route for $i$ using a combination of public transit and ridesharing (rider only) \\
    $\hat{\pi}_i$      & Route for $i$ using only public transit (rider only) 					\\
    $d(\pi_i)$         & The driver of ridesharing route $\pi_i$		           				\\
    $t(p_i)$           & Travel time for traversing path $p_i$ by private vehicle  				\\
    $t(\pi_i)$ \& $t(\hat{\pi}_i)$   & Travel time for traversing route $\pi_i$ and $\hat{\pi}_i$ resp.   \\
    $t(u,v)$ \& $\hat{t}(u,v)$   & Travel time from location $u$ to $v$ by private vehicle and public transit resp. \\ \hline
   \end{tabular}
\caption{Parameters for a trip announcement $i$.}
\label{table-notation}
\end{table}
The maximum trip time $\gamma_i$ of a driver $i$ can be calculated as $\gamma_i = t(p_i) + z_i$ if $p_i$ is given; otherwise $\gamma_i = t(o_i,d_i) + z_i$. For a passenger $j$, $\gamma_j$ is more flexible; it is default to be $\gamma_j = t(\hat{\pi}_i)$, where $\hat{\pi}_i$ is the fastest public transit route.

For a driver $i \in D$ and a set $J \subseteq R$ of riders, the set $\sigma(i) = \{i\} \cup J$ is called a {\em feasible match} if the routes for all trips of $\sigma(i)$ satisfy the requirements (constraints) specified by the parameters of the trips collectively as listed below:
\begin{enumerate}
\setlength\itemsep{0em}
\item \textit{Ridesharing route constraint}: for $J=\{j_1,\ldots,j_k\}$, there is a path $(o_i,o_{j_1},\ldots,o_{j_k}$, $s,d_i)$ in $N$, where $s$ is the drop-off location for Type 1 match;
or there is a path $(o_i,s,d_{j_1},...,d_{j_k},d_i)$ in $N$, where $s$ is the pick-up location for Type 2 match.
\item \textit{Capacity constraint}: limits the number of passengers a driver can serve, $1\leq |J| \leq n_i$.
\item \textit{Acceptable constraint}: each passenger $j \in J$ is given an acceptable route $\pi_j$ offered by $i$ such that $t(\pi_j) \leq \theta_j \cdot t(\hat{\pi}_j)$ for $0 < \theta_j \leq 1$, where $\hat{\pi}_j$ is the public transit route with shortest travel time for $j$.
\item \textit{Travel time constraint}: each trip $j \in \sigma(i)$ departs from $o_j$ no earlier than $\alpha_j$, arrives at $d_j$ no later than $\beta_j$, and the total travel duration of $j$ is at most $\gamma_j$.
The exact application of these time constraints is described in Subsection~\ref{subsection-alg-feas-all} (Algorithm 1) and Subsection~\ref{subsection-alg-feas-all} (Algorithm 2).
\item \textit{Stop constraint}: the number of unique locations visited by driver $i$ to pick-up (for Type 1) or drop-off (for Type 2) all passengers of $\sigma(i)$ is at most $\delta_i$.
\end{enumerate}


\subsubsection{Phase one (Algorithm 1)}\label{subsection-alg-feas-single}
Now we describe how to compute a feasible match between a driver and a passenger for Type 1. The computation for Type 2 is similar and we omit it.
For every trip $i \in D \cup R$, we first compute the set $S_{do}(i)$ of feasible drop-off locations for trip $i$.
Each element in $S_{do}(i)$ is a station-time tuple $(s, \alpha_i(s))$ of $i$, where $\alpha_i(s)$ is the earliest possible time $i$ can reach station $s$.
When computing feasible matches, we use a simplified model for the waiting time and ridesharing service time: given the fastest travel time $t(u,v)$ from location $u$ to location $v$, we multiply a small constant $\epsilon>1$ with $t(u,v)$ to simulate the waiting time and ridesharing service time.
In this model, the waiting time and ridesharing service time are considered together, as a whole.
The station-time tuples are computed by the following preprocessing procedure.
\begin{itemize}[leftmargin=*]
\setlength\itemsep{0em}
\item We find all feasible station-time tuples for each passenger $j \in R$. A station $s$ is \emph{feasible} for $j$ if $j$ can reach $d_j$ from $s$ within time window $[\alpha_j, \beta_j]$, $t(o_j,s) + \hat{t}(s,d_j) \leq \gamma_j$ and $t(o_j,s) + \hat{t}(s,d_j) \leq \theta_j \cdot \hat{t}(o_j,d_j)$.
	\begin{itemize}
	\item The earliest possible time to reach station $s$ for $j$ can be computed as $\alpha_j(s) = \alpha_j + t(o_j,s)$ without pick-up and drop-off time. Since we do not consider waiting time and ridesharing service time separately, $\alpha_j(s)$ also denotes the earliest departure time of $j$ at station $s$.
	\item Let $\hat{t}(s,d_j)$ be the travel time of a fastest public route. Station $s$ is \emph{time feasible} if $\alpha_j(s) + \hat{t}(s,d_j) \leq \beta_j$, $t(o_j,s) + \hat{t}(s,d_j) \leq \gamma_j$ and $t(o_j,s) + \hat{t}(s,d_j) \leq \theta_j \cdot \hat{t}(o_j,d_j)$.
	\end{itemize}
\item Next, we find all feasible station-time tuples for each driver $i \in D$ using a similar calculation.
	\begin{itemize}[leftmargin=*]
    \setlength\itemsep{0em}
    \item Without considering pick-up and drop-off time separately, the earliest arrival time of $i$ to reach $s$ is $\alpha_i(s) = \alpha_i + t(o_i,s)$. Station $s$ is \emph{time feasible} if $\alpha_i(s) + t(s,d_i) \leq \beta_i$ and $t(o_i,s) + t(s,d_i) \leq \gamma_i$.
	\end{itemize}
\end{itemize}

After the preprocessing, the Algorithm~1 finds all matches consists of a single passenger.
For each pair $(i, j)$ in $D \times R$, let $\alpha_i(o_j) = \max\{\alpha_i,\alpha_j - t(o_i,o_j)\}$ be the latest departure time for driver $i$ from $o_i$ such that $i$ can still pick-up $j$ at the earliest; this minimizes the time (duration) needed for driver $i$ to wait for passenger $j$, and hence, the total travel time of $i$ is minimized.
The process of checking if the match $\sigma(i) = \{i,j\}$ is feasible for all pairs of $(i,j)$ can be performed as in Algorithm~1 in Figure~\ref{alg-feas-single}.
\begin{figure}[htbp]
\small
\textbf{Algorithm~1} Single passenger
\begin{algorithmic}[1]
\For {each pair $(i, j)$ in $D \times R$}
    \For {each station $s$ in $S_{do}(i) \cap S_{do}(j)$}
        \State $t_1 = t(o_i,o_j) + t(o_j,s)$; $t_2 = t(o_j,s)$; \hspace*{2mm} /* travel duration for $i$ and $j$ to reach $s$ resp. */
        \State $t = \alpha_i(o_j) + t_1$; \hspace*{6mm} /* earliest departure time at station $s$ */
        \If {$t \leq \beta_i(s) \wedge (t_1 + t(s, d_i) \leq \gamma_i) \wedge t \leq \beta_j(s) \wedge (t_2 + \hat{t}(s, d_j) \leq \gamma_j) \wedge  (t_2 + \hat{t}(s, d_j) \leq \theta_j \cdot \hat{t}(o_j, d_j)\})$}
            \State create an edge $(i, J=\{j\})$ in $E(H)$ to represent $\sigma(i) = \{i,j\}$.
            \State \textbf{break} inner for-loop; \hspace*{2mm} /* can be allowed to run to completion for a better route */
        \EndIf
    \EndFor
\EndFor
\end{algorithmic}
\caption{Algorithm for computing matches consists of a single passenger.}
\label{alg-feas-single}
\vspace{-1mm}
\end{figure}

\subsubsection{Phase two (Algorithm 2)}\label{subsection-alg-feas-all}
We extend Algorithm~1 to create matches with more than one passenger.
Let $H(D,R,E)$ be the graph after computing all possible matches consists of a single passenger (instance computed by Algorithm~1).
We start with computing feasible matches consists of two passengers, then three passengers, and so on.
Let $\varSigma(i)$ be the set of matches found so far for driver $i$ and $\varSigma(i,p-1) = \{\sigma(i) \in \varSigma(i) \mid |\sigma(i) \setminus \{i\}|=p-1\}$ be the set of matches with $p-1$ passengers, and we try to extend $\varSigma(i,p-1)$ to $\varSigma(i,p)$ for $p \geq 2$.
Let $r_i = (l_0,l_1,\ldots,l_p,s)$ denotes an ordered potential path (travel route) for driver $i$ to pick-up all $p$ passengers of $\sigma(i)$ and drop-off them at station $s$, where $l_0$ is the origin of $i$ and $l_y$ is the pick-up location (origin of passenger $j_y$), $1 \leq y \leq p$.
We extend the notion of $\alpha_i(o_j)$, defined above, to all locations of $r_i$.
That is, $\alpha_i(l_p)$ is the latest departure time of $i$ to pick-up all passengers $j_1,\ldots,j_p$ such that the waiting time of $i$ is minimized, and hence, travel time of $i$ is minimized.
All possible combinations of $r_i$ are enumerated to find a feasible path $r_i$; the process of finding $r_i$ is described in the following.
\begin{itemize}
\setlength\itemsep{0em}
\item First, we fix a combination of $r_i$ such that $|\sigma(i)| \leq n_i + 1$ and $r_i$ satisfies the stop constraint. The order of the pick-up origin locations is known when we fix a path $r_i$.
\item The algorithm determines the actual drop-off station $s$ in $r_i = (l_0,l_1,\ldots,l_{p},s)$.
Let $j_{y}$ be the passenger corresponds to pick-up location $l_y$ for $1 \leq y \leq p$ and $l_0 = o_i$.
For each station $s$ in $\bigcap_{0 \leq y \leq p} S_{do}(j_y)$, the algorithm checks if $r_i = (l_0,l_1,\ldots,l_{p},s)$ admits a time feasible path for each trip in $\sigma(i)$.
    \begin{itemize}
        \item The total travel time (duration) for $i$ from $l_0$ to $s$ is $t_i = t(l_0, l_1) + \cdots + t(l_{p-1},l_{p}) + t(l_p, s)$.
        The total travel time (duration) for $j_y$ from $l_y$ to $s$ is $t_{j_y} = t(l_y,l_{y+1}) + \cdots + t(l_{p-1},l_{p}) + t(l_p, s)$, $1 \leq y \leq p$.
        \item Since the order for $i$ to pick up $j_y$ ($1 \leq y \leq p$) is fixed, $\alpha_i(l_p)$ can be calculated as $\alpha_i(l_p) = \max\{\alpha_i, \alpha_{j_1} - t(l_0,l_1), \alpha_{j_{2}} - t(l_0,l_1) - t(l_1,l_2), \ldots, \alpha_{j_{p}} - t(l_0,l_1) - \cdots - t(l_{p-1},l_p)\}$.
        The earliest arrival time at $s$ for all trips in $\sigma(i)$ is $t = \alpha_i(l_p) + t_i$.
        \item If $t \leq \beta_i(s)$, $t_i + t(s, d_i) \leq \gamma_i$, and for $1\leq y\leq p$, $t \leq \beta_{j_{y}}(s)$, $t_{j_y} + \hat{t}(s, d_{j_{y}}) \leq \gamma_{j_{y}}$ and $t_{j_y} + \hat{t}(s, d_{j_{y}}) \leq \delta_{j_{y}} \cdot \hat{t}(o_{j_{y}}, d_{j_{y}})$, then $r_i$ is feasible.
    \end{itemize}
\item If $r_i$ is feasible, add the match corresponds to $r_i$ to $H$. Otherwise, check next combination of $r_i$ until a feasible path $r_i$ is found or all combinations are exhausted.
\end{itemize}
The pseudo code for the above process is given in Algorithm~2 (Figure~\ref{alg-feas-all}).
\begin{figure}[!htbp]
\small
\textbf{Algorithm~2} Compute all matches
\begin{algorithmic}[1]
\For {$i$ = 1 to $|D|$}
    \State $p = 2$;
    \While {($p \leq n_i$ and $\varSigma(i,p-1) \neq \emptyset$)}
    \For {each match $\sigma(i)$ in $\varSigma(i,p-1)$}
        \For {each $j \in R$ s.t. $j \notin \sigma(i)$}
            \State /* check if $\sigma(i) \cup \{j\}$ satisfies Observation~\ref{obs-1}, and if not, skip $j$ */
            \State \textbf{if} {$((\sigma(i) \setminus \{q\}) \cup \{j\}) \in \varSigma(i,p-1)$ for all $q \in \sigma(i) \setminus \{i\}$} \textbf{then}
                \State \hspace{0.5cm} \textbf{if} {($\sigma(i) \cup \{j\}$ has not been checked) and (feasibleInsert($\sigma(i), j$))} \textbf{then}
                    \State \hspace{1cm} create an edge $(i, J)$ in $E(H)$ to represent $\sigma_J(i) = \{i,J\}$.
                    \State \hspace{1cm} add $\sigma(i) \cup \{j\}$ to $\varSigma(i,p)$.
                \State \hspace{0.5cm} \textbf{end if}
         \EndFor
    \EndFor
    \State $p = p + 1$;
    \EndWhile
\EndFor
\\
\textbf{Procedure} feasibleInsert($\sigma(i), j$) \hspace{3mm} /* find a feasible path for $i$ to serve $\sigma(i) \cup \{j\}$ if exists */
\\ Let $r_i = (l_0,l_1,\ldots,l_p,s)$ denotes a potential path for driver $i$ to serve trips in $\sigma(i) \cup \{j\}$.
\For {each station $s$ in $\bigcap_{0 \leq y \leq p} S_{do}(j_y)$}
    \For {each combination of $r_i = (l_0,\ldots,l_p,s)$ that satisfies the stop constraint}
        \State $t_i = t(l_0, l_1) + \cdots + t(l_{p-1},l_{p}) + t(l_p, s)$; $t_{j_y} = t(l_y,l_{y+1}) + \cdots + t(l_{p-1},l_{p}) + t(l_p, s)$;
        \State $t = \alpha_i(l_p) + t_i$; /* the earliest arrival time at $s$ for all trips in $\sigma(i)$ */
        \State \textbf{if} [($t \leq \beta_i(s) \wedge (t_i + t(s, d_i) \leq \gamma_i)$) and (for $1\leq y \leq p$, $t \leq \beta_{j_{y}}(s) \wedge (t_{j_y} + \hat{t}(s, d_{j_{y}}) \leq$
        \State \hspace{0.5cm} $\theta_{j_{y}} \cdot \hat{t}(o_{j_{y}}, d_{j_{y}})) \wedge (t_{j_y} + \hat{t}(s, d_{j_{y}}) \leq \gamma_{j_{y}}))$] \textbf{then}
            \State \hspace{0.5cm} \Return True;
    \EndFor
\EndFor
\\ \Return False;
\end{algorithmic}
\caption{Algorithm for computing matches consists of multiple passengers.}
\label{alg-feas-all}
\vspace{-1mm}
\end{figure}
We show that the latest departure time $\alpha_i(l_p)$ used in Algorithm~2 indeed minimizes the total travel time of $i$ to reach $l_p$.
\begin{theorem}
Given a feasible path $r_i = (l_0,\ldots,l_p,s)$ for driver $i$ that serves $p$ passengers in a match $\sigma(i)$.
The latest departure time $\alpha_i(l_p)$ calculated above minimizes the total travel time of $i$ to reach $l_p$.
\end{theorem}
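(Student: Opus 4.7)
The plan is to derive a closed-form expression for the arrival time of driver $i$ at $l_p$ as a function of the chosen departure time $\tau$ from $l_0$, observe that the driving time $T_{l_0 \to l_p} = \sum_{y=0}^{p-1} t(l_y,l_{y+1})$ along $r_i$ is an unconditional lower bound on the travel duration $\mathrm{arr}(l_p) - \tau$, and then show that choosing $\tau = \alpha_i(l_p)$ attains this lower bound while no smaller feasible choice does.

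First I would prove by induction on $y \in \{1,\ldots,p\}$ that, writing $T_{l_a \to l_b}$ for the cumulative driving time from $l_a$ to $l_b$ along $r_i$ (with $T_{l_b \to l_b}=0$), the arrival time of $i$ at $l_y$ when departing $l_0$ at time $\tau$ is
\[
\mathrm{arr}(l_y) \;=\; \max\!\bigl\{\tau + T_{l_0 \to l_y},\; \max_{1 \leq k \leq y}\bigl(\alpha_{j_k} + T_{l_k \to l_y}\bigr)\bigr\}.
\]
The base case follows from $\mathrm{arr}(l_1)=\max\{\tau+t(l_0,l_1),\alpha_{j_1}\}$ (the driver must wait at $l_1$ if it arrives before $j_1$'s earliest departure $\alpha_{j_1}$), and the inductive step follows from the recursion $\mathrm{arr}(l_{y+1})=\max\{\mathrm{arr}(l_y)+t(l_y,l_{y+1}),\alpha_{j_{y+1}}\}$, which unfolds into the claimed maximum.

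Next I would subtract $\tau$ to get the total travel duration
\[
\mathrm{arr}(l_p)-\tau \;=\; \max\!\bigl\{T_{l_0 \to l_p},\; \max_{1 \leq y \leq p}\bigl(\alpha_{j_y} + T_{l_y \to l_p} - \tau\bigr)\bigr\}.
\]
Since driver $i$ must physically traverse every edge on $r_i$ from $l_0$ to $l_p$, the quantity $T_{l_0 \to l_p}$ is an absolute lower bound on the travel duration regardless of $\tau$. Hence the duration equals this minimum $T_{l_0 \to l_p}$ if and only if $\tau \geq \alpha_{j_y} + T_{l_y \to l_p} - T_{l_0 \to l_p} = \alpha_{j_y} - T_{l_0 \to l_y}$ for every $1 \leq y \leq p$. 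Combined with the feasibility requirement $\tau \geq \alpha_i$, this is precisely the condition $\tau \geq \alpha_i(l_p)$, by the very definition of $\alpha_i(l_p)$ given in the text. Substituting $\tau = \alpha_i(l_p)$ therefore attains the lower bound $T_{l_0 \to l_p}$, while any strictly smaller feasible $\tau$ incurs some nonzero waiting and yields a strictly larger travel duration.

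The main obstacle I anticipate is purely notational: the recursive unrolling produces a cumbersome nested maximum, and one must be careful to state the inductive hypothesis in a form that survives the max-over-max manipulation in the inductive step. Once the closed form for $\mathrm{arr}(l_p)$ is written down, the remainder of the argument is a short algebraic comparison against the trivial lower bound $T_{l_0 \to l_p}$, and the match with the formula for $\alpha_i(l_p)$ is immediate.
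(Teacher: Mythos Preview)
Your proposal is correct and in fact more explicit than the paper's own argument. The paper proves the statement by a direct induction on $y$: assuming $\alpha_i(l_{y-1})$ minimizes travel time to $l_{y-1}$, it observes that $\alpha_i(l_y)=\max\{\alpha_i(l_{y-1}),\,\alpha_{j_y}-t(l_0,l_1)-\cdots-t(l_{y-1},l_y)\}$ and concludes (rather tersely) that this also minimizes travel time to $l_y$. Your route differs in that you first use induction to derive a closed-form expression for the arrival time $\mathrm{arr}(l_p)$ as a function of the departure time $\tau$, and only then compare the resulting duration against the unconditional lower bound $T_{l_0\to l_p}$; the optimality of $\alpha_i(l_p)$ then drops out as the exact threshold at which all waiting terms become inactive. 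This buys you two things the paper's proof does not make explicit: the precise value of the minimum duration (namely the pure driving time $T_{l_0\to l_p}$), and the full characterization that \emph{any} $\tau\geq\alpha_i(l_p)$ achieves it while any smaller feasible $\tau$ does not. The paper's approach is shorter but leaves the reader to fill in exactly why no waiting occurs in the inductive step; your approach is longer but self-contained.
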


\begin{proof}
Prove by induction. For the base case $\alpha_i(l_1) = \max\{\alpha_i,\alpha_{j_{1}} - t(l_0,l_{1})\}$, $i$ does not need to wait for $j_1$. Hence, the total travel time of $i$ to pick-up $j_1$ is minimized with departure time $\alpha_i(l_1)$.
Assume the lemma holds for $1 \leq y-1 < p$, that is, $\alpha_i(l_{y-1})$ minimizes the total travel time of $i$ to reach $l_{y-1}$. We prove for $y$.
From the calculation, $\alpha_i(l_{y}) = \max\{\alpha_i(l_{y-1}), \alpha_{j_{y}} - t(l_0,l_{1}) - t(l_{1},l_{2}) - \cdots - t(l_{y-1},l_y)\}$. By the induction hypothesis, $\alpha_i(l_{y})$ minimizes the total travel time of $i$.
\end{proof}

The running time of Algorithm~2 heavily depends on the number of subsets of passengers to be checked for feasibility.
One way to speed up Algorithm~2 is to use dynamic programming (or memoization) to avoid redundant checks on a same subset.
For each feasible match $|\sigma(i)| = p$ of a driver $i \in D$, we store every feasible path $r_i = \{i, j_1,\ldots,j_p,s\}$ and extend from each feasible path $r_i$ to insert a new trip to minimize the number of ordered potential paths we need to test.
We can further make sure that no path is tested twice during execution.
First, the set $R$ of riders is given a fixed ordering (based on the integer labels).
For a feasible path $r_i$ of a driver $i$, the check of
inserting a new rider $j$ into $r_i$ is performed only if $j$ is larger than every rider in $r_i$ according to the fixed ordering.
A heuristic approach to speed up Algorithm~2 is given in at the end of Section~\ref{sec-instances}.

\section{Approximation algorithms} \label{sec-approximate}
We show that the maximization problem defined in Section 3 is NP-hard and give approximation algorithms for the problem. 
When every edge in $H(D,R,E)$ consists of only two vertices (one driver and one passenger), the maximization problem is equivalent to the maximum matching, which can be solved in polynomial time.
However, if the edges consist of more than two vertices, they become hyperedges. In this case, the integer program~(\ref{obj-1})-(\ref{constraint-2}) becomes a formulation of the maximum weighted set packing problem, which is NP-hard~~\cite{CINP79-GJ,Karp72}.
Our maximization problem is a special case of the maximum weighted set packing problem.
We first show our maximization problem instance $H(D,R,E)$ is indeed NP-hard.

\subsection{NP-hardness}
It was mentioned in~\cite{PNAS14-S} that their minimization problem related to shareability hyper-network is NP-Complete, which is similar to our maximization problem formulation. However, an actual reduction proof was not described.
One may think the maximization problem relates to covering problems; rather, it relates to packing problems.
We prove our maximization problem is NP-hard by a reduction from a special case of the maximum 3-dimensional matching problem (3DM).
An instance of 3DM consists of three disjoint finite sets $A$, $B$ and $C$, and a collection $\mathcal{F} \subseteq A \times B \times C$.
That is, $\mathcal{F}$ is a collection of triplets $(a,b,c)$, where $a \in A, b \in B$ and $c \in C$.
A 3-dimensional matching is a subset $\mathcal{M} \subseteq \mathcal{F}$ such that all sets in $\mathcal{M}$ are pairwise disjoint.
The decision problem of 3DM is that given $(A, B, C, \mathcal{F})$ and an integer $q$, decide whether there exists a matching $\mathcal{M} \subseteq \mathcal{F}$ with $|\mathcal{M}| \geq q$.
We consider a special case of 3DM: $|A| = |B| = |C| = q$; it is still NP-complete~\cite{CINP79-GJ,Karp72}.
Given an instance $(A,B,C,\mathcal{F})$ of 3DM with $|A| = |B| = |C| = q$, we construct an instance $H(D,R,E)$ (bipartite hypergraph) of the maximization problem as follows:
\begin{itemize}
\setlength\itemsep{0em}
\item $D(H) = A$, the set of drivers and $R(H) = B \cup C$, the set of passengers.
\item For each $f \in \mathcal{F}$, create a hyperedge $e(f)$ in $E(H)$ containing elements $(a,b,c)$, where $a$ represents a driver and $\{b,c\}$ represent two different passengers.
Further, create edges $e'(f) = \{a, b\}$ and $e''(f) = \{a, c\}$.
\end{itemize}

\begin{theorem}
The maximization problem is NP-hard.
\end{theorem}

\begin{proof}
By Theorem~\ref{theorem-ILP}, we only need to prove the ILP~(\ref{obj-1})-(\ref{constraint-2}) is NP-hard, which is done by showing that an instance $(A,B,C,\mathcal{F})$ of the maximum 3-dimensional matching problem has a solution $\mathcal{M}$ of cardinality $q$ if and only if the bipartite hypergraph instance $H(D,R,E)$ has a solution $X$ with $2q$ passengers.

Assume that $(A,B,C,\mathcal{F})$ has a solution $\mathcal{M} = \{m_1, m_2,\ldots, m_q\}$. For each $m_i$ ($1 \leq i \leq q$), add the corresponding hyperedge $e(m_i) \in E(H)$ to $X$ (that is, setting the corresponding variable $x_{e(m_i)} = 1$).
Since $m_i \cap m_j = \emptyset$ for $1 \leq i \neq j \leq q$ (implying the constraint~(\ref{constraint-1}) of the ILP is always satisfied), and each edge $e \in X$ contains two passengers, $X$ is a valid solution to $H(D,R,E)$ with $2q$ passengers.

Assume that $H(D,R,E)$ has a solution $X$ with $2q$ passengers served (the objective function (\ref{obj-1}) of ILP is $2q$ and every edge $e(f) \in X$ corresponds to variable $x_{e(f)} = 1$).
For every edge $e(f) \in X$, add the corresponding set $f \in \mathcal{F}$ to $\mathcal{M}$.
From the constraint~(\ref{constraint-1}) of the ILP, $X$ is pairwise disjoint.
In order to serve $2q$ passengers, $|X| = |D| = q$ since every $e(f) \in X$ must contain two different passengers.
Hence, $\mathcal{M}$ is a valid solution to $(A,B,C,\mathcal{F})$ s.t. $|\mathcal{M}| = q$.

The size of $H(D,R,E)$ is polynomial in $q$. It takes a polynomial time to convert a solution of $H(D,R,E)$ to a solution of the 3DM instance $(A,B,C,\mathcal{F})$ and vice versa. 
\end{proof}

\subsection{2-approximation algorithm}
For consistency, we follow the convention in~\cite{SWAT00-B,SODA99-C} that a $\rho$-approximation algorithm for a maximization problem is defined as $\rho \cdot w(\mathcal{C}) \geq OPT$ for $\rho > 1$, where $w(\mathcal{C})$ and $OPT$ are the values of approximation and optimal solutions respectively.
In this section, we give a $2$-approximation algorithm to the maximization problem instance $H(D,R,E)$.
Our $2$-approximation algorithm (refer to as \textit{ImpGreedy}) is a simplified version of the simple greedy~\cite{SWAT00-B,SODA99-C,PNAS14-S} discussed in Section~\ref{sec-app-algs}, except the running time and memory usage are significantly improved by computing a solution directly from $H(D,R,E)$ without solving the independent set/weighted set packing problem.

\subsubsection{Description of ImpGreedy Algorithm}
For a maximization problem instance $H(D,R,E)$, we use $\Gamma$ to denote a current partial solution, which consists of a set of matches represented by the hyperedges in $E(H)$.
Let $P(\Gamma)=\bigcup_{e \in \Gamma} J_e$ (called \textit{covered passengers}).
Initially, $\Gamma = \emptyset$.
In each iteration, we add a match with the most number of uncovered passengers to $\Gamma$, that is, select an edge $e=(i,J_e)$ such that 
$|J_e \setminus P(\Gamma)|$ is maximum, and then add $e$ to $\Gamma$.
Remove $E_e = \cup_{j \in A(e)} E_j$ from $E(H)$ ($E_j$ is defined in Section~\ref{sec-exact-IP}).
Repeat until $P(\Gamma) = R$ or $|\Gamma| = |D|$.
The pseudo code of ImpGreedy algorithm is shown in Figure~\ref{alg-new-approx}.
\begin{figure}[htbp]
\small
\textbf{Algorithm~3} ImpGreedy Algorithm \\
\textbf{Input:} The hypergraph $H(D,R,E)$ for problem instance $(N,\mathcal{A},T)$. \\
\textbf{Output:} A solution $\Gamma$ to $(N,\mathcal{A},T)$ with $2$-approximation ratio.
\begin{algorithmic}[1]
\\ $\Gamma = \emptyset$; $P(\Gamma) = \emptyset$;
\While{($P(\Gamma) \neq R$ and $|V(\Gamma) < |D|$)}
    \State compute $e = \text{argmax}_{e \in E(H)} |J_e \setminus P(\Gamma)|$;
    $\Gamma = \Gamma \cup \{e\}$; update $P(\Gamma)$; remove $E_e$ from $E(H)$;
\EndWhile
\end{algorithmic}
\caption{$2$-approximating algorithm for problem instance $(N,\mathcal{A},T)$.}
\label{alg-new-approx}
\vspace{-1mm}
\end{figure}


\noindent In ImpGreedy Algorithm, when an edge $e$ is added to $\Gamma$, $E_e$ is removed from $E(H)$, so Property~\ref{property-gamma} holds for $\Gamma$.
\begin{property}
For every $i \in D$, at most one edge $e$ from $E_i$ can be selected in any solution.
\label{property-gamma}
\end{property}

\subsubsection{Analysis of ImpGreedy Algorithm}
Let $\Gamma = \{x_1, x_2,\ldots, x_a\}$ be a solution found by Algorithm~3, where $x_i$ is the $i^{th}$ edge added to $\Gamma$.
Throughout the analysis, we use $OPT$ to denote an optimal solution, that is, $P(OPT) \geq P(\Gamma)$.
Further, $\Gamma_i = \bigcup_{1 \leq b \leq i} x_b$ for $1 \leq i \leq a$, $\Gamma_0 = \emptyset$ and $\Gamma_a = \Gamma$.
The driver of match $x_i$ is denoted by $d(x_i)$.
The main idea of our analysis is to add up the maximum difference between the number of covered passengers by selecting $x_i$ in $\Gamma$ and not selecting $x_i$ in $OPT$.
For each $x_i\in \Gamma$, by Property~\ref{property-gamma}, there is at most one $y \in OPT$ with $d(y)=d(x_i)$.
We order $OPT$ and introduce dummy edges to $OPT$ such that $d(y_i) = d(x_i)$ for $1 \leq i\leq a$.
Formally, for $1\leq i\leq a$, define
\[
OPT(i)=\{y_1,\ldots,y_i \mid 1\leq b \leq i, d(y_b)=d(x_b) \text{ if } y_b \in OPT, \text{ otherwise } y_b \text{ a dummy edge}\}.
\] 
A dummy edge $y_b\in OPT(i)$ is defined as $d(y_b) = d(x_b)$ with $J_{y_b}=\emptyset$.
The gap of an edge $x_i \in \Gamma$ is defined as
\[
\gap(x_i) = |J_{y_i}| - |J_{x_i} \setminus P(\Gamma_{i-1})| + |J'_{x_i}|,
\]
where $J'_{x_i} = (J_{x_i} \setminus P(\Gamma_{i-1})) \cap P(OPT \setminus \Gamma)$ is the maximum subset of passengers in $J_{x_i} \setminus P(\Gamma_{i-1})$ that are also covered by drivers in $OPT \setminus \Gamma$.
The intuition is that the sum of $\gap(x_i)$ for all $x_i \in \Gamma$ states the maximum possible number of passengers may not be covered by $\Gamma$.
Let $P(OPT(i)) = \bigcup_{1 \leq b \leq i} J_{y_b}$ and $P(OPT'(i)) = \bigcup_{1 \leq b \leq i} J'_{x_b}$ for any $i \in [1,\ldots,a]$.
Then the maximum gap between $\Gamma$ and $OPT$ can be calculated as $\sum_{x \in \Gamma_a} \gap(x) = |P(OPT(a))| + |P(OPT'(a))| - |P(\Gamma_{a})|$.
First, we show that $P(OPT) = P(OPT(a)) \cup P(OPT'(a))$.

\begin{prop}
Let $\Gamma = \{x_1,\ldots,x_a\}$, $P(OPT(a)) = \bigcup_{1 \leq i \leq a} J_{y_i}$ and $P(OPT'(a)) = \bigcup_{1 \leq i \leq a} J'_{x_i}$.
Then, $P(OPT) = P(OPT(a)) \cup P(OPT'(a))$.
\label{prop-opt-size}
\end{prop}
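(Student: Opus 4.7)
The plan is a two-way set inclusion. The easy direction $P(OPT(a))\cup P(OPT'(a))\subseteq P(OPT)$ falls out of the definitions: every non-dummy $y_i$ is an edge of $OPT$, while every dummy contributes an empty rider set, so $P(OPT(a))\subseteq P(OPT)$; likewise $J'_{x_i}\subseteq P(OPT\setminus\Gamma)\subseteq P(OPT)$ by the definition of $J'_{x_i}$. I would dispatch this inclusion in a line or two.

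The substantive direction is $P(OPT)\subseteq P(OPT(a))\cup P(OPT'(a))$. I would fix $j\in P(OPT)$ and let $y\in OPT$ be the unique edge with $j\in J_y$; uniqueness comes from the matching constraint~(\ref{constraint-1}) applied to the feasible solution $OPT$, which forbids any passenger from appearing in two selected edges. The argument then splits on whether $d(y)\in D(\Gamma)=\{d(x_1),\ldots,d(x_a)\}$. If $d(y)=d(x_i)$ for some (necessarily unique) $i$, then by the construction of $OPT(a)$ the slot $y_i$ is exactly $y$ and is not a dummy, so $j\in J_{y_i}\subseteq P(OPT(a))$. Otherwise $d(y)\notin D(\Gamma)$; then $y$ shares a driver with no $x_i$, so $y\in OPT\setminus\Gamma$ and $j\in P(OPT\setminus\Gamma)$. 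In this branch I would take $i^*$ to be the smallest index with $j\in J_{x_{i^*}}$, observe $j\notin P(\Gamma_{i^*-1})$ by minimality, and conclude $j\in(J_{x_{i^*}}\setminus P(\Gamma_{i^*-1}))\cap P(OPT\setminus\Gamma)=J'_{x_{i^*}}\subseteq P(OPT'(a))$.

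The main obstacle is justifying, in the second branch, that some such $i^*$ actually exists, i.e., that $j\in P(\Gamma)$. This is the only place where greedy's termination behavior enters. If Algorithm~3 halts with $P(\Gamma)=R$ then $j\in R=P(\Gamma)$ is immediate; if it halts with $|V(\Gamma)|=|D|$ then every driver is in $D(\Gamma)$, contradicting the branch hypothesis $d(y)\notin D(\Gamma)$. The delicate situation is an execution in which $E(H)$ empties out before either loop condition fires; I would handle it by noting that without loss of generality $OPT$ may be taken to maximize $|D(OPT)\cap D(\Gamma)|$ among all optima, which forces every $OPT$-passenger missed by $\Gamma$ to be served by a driver already in $D(\Gamma)$ and therefore into Case~1, so the second-branch obstacle never materializes.
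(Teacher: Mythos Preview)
Your overall plan mirrors the paper's: split $P(OPT)$ according to whether the serving $OPT$-edge's driver lies in $D(\Gamma)$, dispatch the first case into $P(OPT(a))$, and push the second into $P(OPT'(a))$ via greedy's termination behaviour. The paper compresses this into a few sentences and only argues the nontrivial inclusion $P(OPT)\subseteq P(OPT(a))\cup P(OPT'(a))$; your explicit two-way inclusion and the choice of the minimal index $i^*$ to land in $J'_{x_{i^*}}$ are welcome clarifications of steps the paper leaves implicit.

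The one real issue is your treatment of the ``$E(H)$ empties out'' scenario. The WLOG device --- choosing $OPT$ to maximize $|D(OPT)\cap D(\Gamma)|$ and then asserting that every $OPT$-passenger missed by $\Gamma$ must be served by a $\Gamma$-driver --- is not justified as stated, and I do not see how to make it go through: nothing obvious prevents an optimum with maximal driver-overlap from still routing some $j\notin P(\Gamma)$ through a driver outside $D(\Gamma)$. More to the point, the detour is unnecessary. Under your Case-2 hypotheses (namely $d(y)\notin D(\Gamma)$ and, toward a contradiction, $j\notin P(\Gamma)$), Observation~\ref{obs-1} guarantees that the singleton match $e'=(d(y),\{j\})$ is feasible and hence lies in $E(H)$ from the outset; since neither $d(y)$ nor $j$ appears in any $x_b\in\Gamma$, no removal step $E_{x_b}=\bigcup_{v\in A(x_b)}E_v$ ever touches $e'$. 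Thus $e'$ remains in $E(H)$ with $|J_{e'}\setminus P(\Gamma)|=1>0$, so the while-loop could not have terminated --- precisely the contradiction the paper invokes in one line (``$z$ would have been found and added to $\Gamma$''). Replace your third termination sub-case with this argument and the proof is complete and aligned with the paper's.
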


\begin{proof}
By definition, $P(OPT)=P(OPT(a)) \cup P(OPT \setminus OPT(a))$.
For any $z$ in $OPT\setminus OPT(a)$, $d(z) \neq d(x)$ for every $x \in \Gamma$.
If $J_z \setminus P(\Gamma) \neq \emptyset$, then $z$ would have been found and added to $\Gamma$ by Algorithm~3.
Hence, $J_z \setminus P(\Gamma) = \emptyset$, implying $J_z \subseteq P(OPT'(a))$ and $P(OPT \setminus OPT(a)) \subseteq P(OPT'(a))$.
\end{proof}

\begin{lemma}
Let $OPT$ be an optimal solution and $\Gamma = \{x_1, x_2,\ldots, x_a\}$ be a solution found by the algorithm.
For any $1 \leq i \leq a$, $\sum_{x \in \Gamma_i} \gap(x) = |P(OPT(i))| - |P(\Gamma_{i})| + |P(OPT'(i))| \leq |P(\Gamma_i)|$.
\label{lemma-max-gap}
\end{lemma}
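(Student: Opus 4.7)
The plan is to prove the statement in two stages: first the equality (routine bookkeeping), then the inequality (the substantive content).

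For the equality $\sum_{x\in\Gamma_i}\gap(x)=|P(OPT(i))|-|P(\Gamma_i)|+|P(OPT'(i))|$, I would sum the defining formula of $\gap$ over $b=1,\ldots,i$ and split into three separate sums. Three disjointness observations then collapse each sum to a cardinality: the $J_{y_b}$'s are pairwise disjoint because $OPT$ is a feasible packing, giving $\sum_b|J_{y_b}|=|P(OPT(i))|$ (dummy edges contributing $0$); the sets $J_{x_b}\setminus P(\Gamma_{b-1})$ are pairwise disjoint by construction and telescope to $P(\Gamma_i)$; and the $J'_{x_b}\subseteq J_{x_b}\setminus P(\Gamma_{b-1})$ inherit that disjointness, yielding $\sum_b|J'_{x_b}|=|P(OPT'(i))|$.

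For the inequality $\sum_{x\in\Gamma_i}\gap(x)\le|P(\Gamma_i)|$, equivalently $|P(OPT(i))|+|P(OPT'(i))|\le 2|P(\Gamma_i)|$, the essential tool is the greedy maximality of Algorithm~3. At every step $b$, I would consider the auxiliary edge $e_b=(d(y_b),\,J_{y_b}\setminus P(\Gamma_{b-1}))$. By Observation~\ref{obs-1}, $e_b$ is a feasible match (being a subset of the feasible match $y_b\in OPT$); its driver $d(y_b)=d(x_b)$ has not been used before step $b$ and its passengers are not yet in $P(\Gamma_{b-1})$, so $e_b$ is still in $E(H)$ when $x_b$ is selected. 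Consequently $|J_{x_b}\setminus P(\Gamma_{b-1})|\ge|J_{y_b}\setminus P(\Gamma_{b-1})|$, which coupled with the trivial $|J'_{x_b}|\le|J_{x_b}\setminus P(\Gamma_{b-1})|$ bounds the newly-covered parts of $J_{y_b}$ and of $J'_{x_b}$ against the newly-covered part of $J_{x_b}$.

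The hard part is that a term-by-term induction $\gap(x_b)\le|J_{x_b}\setminus P(\Gamma_{b-1})|$ does not go through, because the already-covered portion $A_b=J_{y_b}\cap P(\Gamma_{b-1})$ contributes to $|J_{y_b}|$ without any matching term on the right-hand side. I would therefore argue globally rather than step-by-step. Setting $A=\bigcup_bA_b$ and $B=\bigcup_b(J_{y_b}\setminus P(\Gamma_{b-1}))$, the sets $A$ and $B$ are disjoint (since the $J_{y_b}$'s partition $P(OPT(i))$), so $|P(OPT(i))|=|A|+|B|$. Both $A$ and $P(OPT'(i))$ sit inside $P(\Gamma_i)$, while the greedy maximality assembles an injection $\phi:B\hookrightarrow P(\Gamma_i)$ with $\phi(B_b)\subseteq J_{x_b}\setminus P(\Gamma_{b-1})$. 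The desired inequality $|A|+|B|+|P(OPT'(i))|\le 2|P(\Gamma_i)|$ then follows by a double-count in which the disjointness of $OPT$'s edges and a judicious choice of $\phi$ (preferring images outside $A\cap P(OPT'(i))$) ensure that no element of $P(\Gamma_i)$ is charged more than twice across the three contributions. Exhibiting this counting cleanly, while handling the overlaps among $A$, $P(OPT'(i))$, and $\phi(B)$ inside $P(\Gamma_i)$, is the technical crux of the argument.
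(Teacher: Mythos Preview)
Your approach is correct and takes a genuinely different route from the paper. The paper argues by induction on $i$: after the base case $\gap(x_1)\le|J_{x_1}|$, the inductive step combines the greedy inequality $|J_{x_i}\setminus P(\Gamma_{i-1})|\ge|J_{y_i}\setminus P(\Gamma_{i-1})|$ with a sharpening of the inductive hypothesis (the paper's inequality~(\ref{eq-5})) that recovers the slack contributed by the already-covered portion $J_{y_i}\cap P(\Gamma_{i-1})$. Your global decomposition $P(OPT(i))=A\sqcup B$, together with the injection $\phi$ built from the per-step greedy inequalities, is more transparent and sidesteps that somewhat delicate inductive bookkeeping; it also makes the equality half explicit, which the paper treats as definitional.

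That said, you are making the final step harder than necessary. There is no ``technical crux'' and no need for a judicious choice of $\phi$: the sets $A$ and $P(OPT'(i))$ are \emph{already disjoint}. Indeed $A\subseteq P(OPT(i))\subseteq P(OPT(a))$, while by definition $P(OPT'(i))\subseteq P(OPT\setminus\Gamma)=P(OPT\setminus OPT(a))$; since $OPT$ is a packing, each passenger lies in at most one edge of $OPT$, so $P(OPT(a))\cap P(OPT\setminus OPT(a))=\emptyset$. Hence $|A|+|P(OPT'(i))|\le|P(\Gamma_i)|$ outright, and together with $|B|=|\phi(B)|\le|P(\Gamma_i)|$ (for \emph{any} injection $\phi$) you obtain $|A|+|B|+|P(OPT'(i))|\le 2|P(\Gamma_i)|$ with no further double-counting argument.
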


\begin{proof}
Recall that $OPT(i)=\{y_1,\ldots,y_i\}$ as defined above. For $y_b \in OPT(i), 1 \leq b \leq i, d(y_b)=d(x_b)$.
We prove the lemma by induction on $i$.
Base case $i=1$: $|P(OPT(1))| - |P(\Gamma_1)| + |P(OPT'(1))| \leq |P(\Gamma_1)|$.
By definition, $\gap(x_1) = |J_{y_1}| - |J_{x_1} \setminus \Gamma_0| + |J'_{x_1}|$.
Since $x_1$ is selected by the algorithm, it must be that $|J_{x_1}| \geq |J_u|$ for all $u \in V(G')$, so $|J_{y_1}| \leq |J_{x_1}|$.
Thus,
\begin{align*}
\gap(x_1) &= |J_{y_1}| - |J_{x_1} \setminus \Gamma_0| + |J'_{x_1}| \\
          &\leq |J'_{x_1}| \leq |J_{x_1}|.
\end{align*}
Assume the statement is true for $i-1 \geq 1$, that is, $\sum_{x \in \Gamma_{i-1}} \gap(x) \leq |P(\Gamma_{i-1})|$, and we prove for $i \leq a$.
By the induction hypothesis, both $P(OPT(i-1))$ and $P(OPT'(i-1))$ are included in the calculation of $\sum_{x \in \Gamma_{i-1}} \gap(x)$. More precisely, $\sum_{x \in \Gamma_{i-1}} \gap(x) = |P(OPT(i-1))| - |P(\Gamma_{i-1})| + |P(OPT'(i-1))| \leq |P(\Gamma_{i-1})|$.
If $|J_{y_i}| \leq |J_{x_i} \setminus P(\Gamma_{i-1})|$, the lemma is true since we can assume $|J'_{x_i}| \leq |J_{x_i}|$.
Suppose $|J_{y_i}| > |J_{x_i} \setminus P(\Gamma_{i-1})|$.
Before $x_i$ is selected, the algorithm must have considered $y_i$ and found that $|J_{x_i} \setminus P(\Gamma_{i-1})| \geq |J_{y_i} \setminus P(\Gamma_{i-1})|$.
Then, $|J_{y_i}| > |J_{x_i} \setminus P(\Gamma_{i-1})| \geq |J_{y_i} \setminus P(\Gamma_{i-1})|$, implying $J_{y_i} \cap P(\Gamma_{i-1}) \neq \emptyset$.
We have
\begin{align}
|J_{x_i} \setminus P(\Gamma_{i-1})| + |J_{y_i} \cap P(\Gamma_{i-1})| 
\geq |J_{y_i} \setminus P(\Gamma_{i-1})| + |J_{y_i} \cap P(\Gamma_{i-1})| = |J_{y_i}|.
\label{eq-4}
\end{align}

Let $J''_{y_i} \subseteq (J_{y_i} \cap P(\Gamma_{i-1}))$ be the set of passengers covered by $P(OPT(i-1)) \cup P(OPT'(i-1))$, namely $J''_{y_i} \subseteq (P(OPT(i-1)) \cup P(OPT'(i-1)))$.
Then by the induction hypothesis,
\begin{align}
\sum_{x \in \Gamma_{i-1}} \gap(x) \leq P(\Gamma_{i-1}) - |J_{y_i} \cap P(\Gamma_{i-1})| + |J''_{y_i}|.
\label{eq-5}
\end{align}
Adding $\sum_{x \in \Gamma_{i-1}} \gap(x)$ and $\gap(x_i)$ together:
\begin{align*}
&(\sum_{x \in \Gamma_{i-1}} \gap(x)) + (\gap(x_i)) \\
&= |P(OPT(i-1))| - |P(\Gamma_{i-1})| + |P(OPT'(i-1))| + |J_{y_i} \setminus J''_{y_i}| - |J_{x_i} \setminus P(\Gamma_{i-1})| + |J'_{x_i}| \\
&\leq (|P(\Gamma_{i-1})| - |J_{y_i} \cap P(\Gamma_{i-1})| + |J''_{y_i}|) + |J_{y_i} \setminus J''_{y_i}| - |J_{x_i} \setminus P(\Gamma_{i-1})| + |J'_{x_i}| \hspace*{16mm} \text{from } (\ref{eq-5}) \\
&= |P(\Gamma_{i-1})| - |J_{y_i} \cap P(\Gamma_{i-1})| + |J_{y_i}| - |J_{x_i} \setminus P(\Gamma_{i-1})| + |J'_{x_i}| \\
&\leq |P(\Gamma_{i-1})| - |J_{y_i} \cap P(\Gamma_{i-1})| + |J_{y_i} \cap P(\Gamma_{i-1})| + |J'_{x_i}| \hspace*{53mm} \text{from } (\ref{eq-4}) \\
&= |P(\Gamma_{i-1})| + |J'_{x_i}| \leq |P(\Gamma_{i-1})| + |J_{x_i} \setminus P(\Gamma_{i-1})| \hspace*{45mm} \text{by defintion of } J'_{x_i} \\
&= P(\Gamma_{i})
\end{align*}
Therefore, by the property of induction, the lemma holds.
\end{proof}

\begin{theorem}
Given the hypergraph instance $H(D,R,E)$.
Algorithm~3 computes a solution $\Gamma$ to $H$ such that $2|P(\Gamma)| \geq |P(OPT)|$, where $OPT$ is an optimal solution, with running time $O(|D| \cdot |E|)$, $|E| \leq |D| \cdot (|R|+1)^K$.
\label{theorem-ImpGreedy}
\end{theorem}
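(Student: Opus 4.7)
The approximation-ratio half of the theorem is essentially a bookkeeping consequence of the machinery already set up. The plan is to apply Lemma \ref{lemma-max-gap} at the terminal index $i=a$, which yields
\[
|P(OPT(a))| + |P(OPT'(a))| - |P(\Gamma_a)| \;=\; \sum_{x \in \Gamma_a} \gap(x) \;\leq\; |P(\Gamma_a)|,
\]
so $|P(OPT(a))| + |P(OPT'(a))| \leq 2 |P(\Gamma)|$. Then I would invoke Proposition \ref{prop-opt-size}, which identifies $P(OPT)$ with $P(OPT(a)) \cup P(OPT'(a))$, to conclude $|P(OPT)| \leq |P(OPT(a))| + |P(OPT'(a))| \leq 2|P(\Gamma)|$, giving the claimed $2$-approximation. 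No further combinatorial argument is needed here; the work was absorbed into the earlier lemma and proposition.

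For the running-time bound, I would argue as follows. By Property \ref{property-gamma}, each iteration of the while loop either adds an edge whose driver is then erased from further consideration (because $E_e$ is removed from $E(H)$) or terminates; so the loop executes at most $|D|$ times. In each iteration, computing $\text{argmax}_{e \in E(H)} |J_e \setminus P(\Gamma)|$ can be done by a single sweep that, for every remaining edge $e=(i, J_e)$, counts passengers in $J_e$ that are not already flagged in the covered-set bit-array $P(\Gamma)$. Maintaining $P(\Gamma)$ as a bit vector over $R$ makes each membership test $O(1)$, and updating $P(\Gamma)$ and removing $E_e$ once the winner is chosen is absorbed into the same sweep. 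The per-iteration cost is therefore proportional to the total size of the remaining incidences, which is bounded by $|V|$ in the sense used by the theorem (the aggregate size of the hypergraph's vertex-edge incidences), giving an overall $O(|D| \cdot |V|)$ runtime.

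The only subtlety I anticipate is making the statement $P(OPT) = P(OPT(a)) \cup P(OPT'(a))$ actually drop passengers from $OPT$ that sit on edges with no counterpart in $\Gamma$. This is handled by Proposition \ref{prop-opt-size}: for any $z \in OPT \setminus OPT(a)$, greediness forces $J_z \setminus P(\Gamma) = \emptyset$, since otherwise the algorithm would have preferred $z$ (or a better edge) at some iteration; thus $J_z$ is entirely absorbed into $P(OPT'(a))$. I would make sure to cite this step explicitly so that the inequality $|P(OPT)| \leq |P(OPT(a))| + |P(OPT'(a))|$ is not mistaken for a double-counting artifact.

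The main obstacle, conceptually, is already discharged by Lemma \ref{lemma-max-gap}; the remaining task is to line up the definitions (in particular $\Gamma_a = \Gamma$ and the role of dummy edges in $OPT(a)$) so that the single-line inequality chain above is unambiguous. The running-time paragraph should also clarify what $|V|$ denotes to avoid ambiguity with $V(H) = D \cup R$ versus the total incidence size of $H$.
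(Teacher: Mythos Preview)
Your proposal is correct and follows essentially the same route as the paper: invoke Lemma~\ref{lemma-max-gap} at $i=a$ and combine with Proposition~\ref{prop-opt-size} to get $|P(OPT)| \leq |P(OPT(a))| + |P(OPT'(a))| \leq 2|P(\Gamma)|$, then bound the loop by $|D|$ iterations with a linear sweep each. One remark: the paper's own proof actually states the per-iteration cost as $O(|E|)$ (giving $O(|D|\cdot|E|)$ overall), so the ``$|V|$'' in the theorem statement is evidently a typo rather than a quantity you need to reinterpret as total incidence size.
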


\begin{proof}
Let $\Gamma = \{x_1,\ldots,x_a\}$, $P(OPT(a)) = \bigcup_{1 \leq i \leq a} J_{y_i}$ and $P(OPT'(a)) = \bigcup_{1 \leq i \leq a} J'_{x_i}$.
By Proposition~\ref{prop-opt-size}, $P(OPT) = P(OPT(a)) \cup P(OPT'(a))$, and by Lemma~\ref{lemma-max-gap}, $|P(OPT(a))| + |P(OPT'(a))| - |P(\Gamma_{a})| \leq |P(\Gamma_a)|$.
We have
\[
|P(OPT)| \leq |P(OPT(a))| + |P(OPT'(a))| \leq 2|P(\Gamma)|.
\]
In each iteration of the while-loop, it takes $O(E)$ to find an edge $x$ with maximum $|J_x \setminus P(\Gamma)|$,
and there are at most $|D|$ iterations. Hence, Algorithm~3 runs in $O(|D| \cdot |E|)$ time.
\end{proof}

\subsection{Approximation algorithms for maximum weighted set packing}\label{sec-app-algs}
Now, we explain the algorithms for the maximum weighted set packing problem, which solve our maximization problem.
Given a universe $\mathcal{U}$ and a family $\mathcal{S}$ of subsets of $\mathcal{U}$, a \emph{packing} is a subfamily $\mathcal{C} \subseteq \mathcal{S}$ of sets such that all sets in $\mathcal{C}$ are pairwise disjoint.
Every subset $S \in \mathcal{S}$ has at most $k$ elements and is given a real weight.
The maximum weighted $k$-set packing problem (MWSP) asks to find a packing $\mathcal{C}$ with the largest total weight.
We can see that the maximization problem on $H(D,R,E)$ is a special case of the maximum weighted $k$-set packing problem, where the trips of 
$D \cup R$ is the universe $\mathcal{U}$ and $E(H)$ is the family $\mathcal{S}$ of subsets, and every $e \in E(H)$ represents at most $k = K+1$ trips ($K$ is the maximum capacity of all vehicles).
Hence, solving MWSP also solves our maximization problem.
Chandra and Halld\'{o}rsson~\cite{SODA99-C} presented a $\frac{2(k+1)}{3}$-approximation and a $\frac{2(2k+1)}{5}$-approximation algorithms (refer to as \textit{BestImp} and \textit{AnyImp} respectively), and
Berman~\cite{SWAT00-B} presented a $(\frac{k+1}{2} + \epsilon)$-approximation algorithm (refer to as \textit{SquareImp}) for the weighted $k$-set packing problem (here, $k = K + 1$), where the latter still has the best approximation ratio.

The three algorithms in~\cite{SWAT00-B,SODA99-C} (AnyImp, BestImp and SquareImp) solve the weighted $k$-set packing problem by first transferring it into a weighted independent set problem, which consists of a vertex weighted graph $G(V,E)$ and asks to find a maximum weighted independent set in $G(V,E)$.
We briefly describe the common local search approach used in these three approximation algorithms.
A \emph{claw} $C$ in $G$ is defined as an induced connected subgraph that consists of an independent set $T_C$ of vertices (called talons) and a center vertex $C_z$ that is connected to all the talons ($C$ is an induced star with center $C_z$).
For any vertex $v \in V(G)$, let $N(v)$ denotes the set of vertices in $G$ adjacent to $v$, called the \emph{neighborhood} of $v$.
For a set $U$ of vertices, $N(U) = \cup_{v \in U} N(v)$.
The \textit{local search} of AnyImp, BestImp and SquareImp uses the same central idea, summarized as follows:
\begin{enumerate}
\item The approximation algorithms start with an initial solution (independent set) $I$ in $G$ found by a \textbf{simple greedy} (refer to as \textit{Greedy}) as follows: select a vertex $u \in V(G)$ with largest weight and add to $I$.
Eliminate $u$ and all $u$'s neighbors from being selected. Repeatedly select the largest weight vertex until all vertices are eliminated from $G$.
\item While there exists claw $C$ in $G$ w.r.t. $I$ such that independent set $T_C$ improves the weight of $I$ (different for each algorithm),
augment $I$ as $I = (I \setminus N(T_C)) \cup T_C$; such an independent set $T_C$ is called an \emph{improvement}.
\end{enumerate}
To apply these algorithms to our maximization problem, we need to convert the bipartite hypergraph $H(D,R,E)$ to a weighted independent set instance $G(V,E)$, which is straightforward.
Each hyperedge $e \in E(H)$ is represented by a vertex $v_e \in V(G)$. The weight $w(v_e) = p(e)$ for each $e \in E(H)$ and $v_e \in V(G)$. There is an edge between $v_{e}, v_{e'} \in V(G)$ if $e \cap e' \neq \emptyset$ where $e, e' \in E(H)$.
We observed the following property.
\begin{property}
When the size of each set in the set packing problem is at most $k$ $(|e| = k, e \in E(H))$, the graph $G(V,E)$ has the property that it is $(k+1)$-claw free, that is, $G(V,E)$ does not contain an independent set of size $k+1$ in the neighborhood of any vertex.
\end{property}
Applying this property, we only need to search a claw $C$ consists of at most $k$ talons, which upper bounds the running time for finding a claw within $O(n^k)$, where $n = |V(G)|$.
When $k$ is very small, it is practical enough for solving our maximization problem instance $H(D,R,E)$ computed by Algorithm~2 from $(N,\mathcal{A},T)$.
It has been mentioned in~\cite{PNAS14-S} that the approximation algorithm in~\cite{SODA99-C} can be applied to the ridesharing problem. However, only the simple greedy (\textit{Greedy}) with $k$-approximation was implemented in~\cite{PNAS14-S}.
Notice that algorithm ImpGreedy (Algorithm 3) is a simplified version of algorithm Greedy, and Greedy is used to get an initial solution in algorithms AnyImp, BestImp and SquareImp. From Theorem~\ref{theorem-ImpGreedy}, we have Corollary~\ref{corollary-approximate}.
\begin{corollary}
Greedy, AnyImp, BestImp and SquareImp algorithms compute a solution to $H(D,R,E)$ with 2-approximation ratio.
\label{corollary-approximate}
\end{corollary}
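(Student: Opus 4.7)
The plan is to deduce the corollary from Theorem~\ref{theorem-ImpGreedy} by chaining two observations: first, that Greedy applied to $G(V,E)$ derived from $H(D,R,E)$ produces a solution at least as good as ImpGreedy; and second, that AnyImp, BestImp and SquareImp are initialized with the Greedy solution and their local-search moves can only raise the total weight of the current independent set. Throughout I would use the identity $w(I) = \sum_{v_e \in I} p(e) = |P(\Gamma)|$, where $\Gamma$ is the matching corresponding to an independent set $I$ under $v_e \leftrightarrow e$, so that a lower bound on the independent-set weight is the same as a lower bound on the number of passengers served.

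First I would establish the equivalence of Greedy and ImpGreedy on this instance class. In ImpGreedy, after an edge $e$ is added to $\Gamma$, all edges in $E_e = \bigcup_{j \in A(e)} E_j$ are discarded, so every surviving edge $e'$ satisfies $A(e') \cap A(e) = \emptyset$ and in particular $J_{e'} \cap P(\Gamma) = \emptyset$. Consequently $|J_{e'}\setminus P(\Gamma)| = |J_{e'}| = p(e') = w(v_{e'})$ for every surviving $e'$. In Greedy, after a vertex $v_e$ is picked, $v_e$ together with all its neighbors in $G$ is eliminated; by construction $v_e$ and $v_{e'}$ are adjacent in $G$ precisely when $e$ and $e'$ share a trip, so the set of surviving vertices matches the set of surviving edges in ImpGreedy. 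Hence the two selection rules $\arg\max_{e} p(e)$ and $\arg\max_{e} |J_e\setminus P(\Gamma)|$ coincide on the surviving candidates in every round, and (modulo tie-breaking) Greedy executes the same steps as ImpGreedy. Theorem~\ref{theorem-ImpGreedy} then yields $2|P(\Gamma_{\text{Greedy}})| \geq |P(OPT)|$.

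Second I would argue that AnyImp, BestImp and SquareImp preserve this bound. Each of them starts from the independent set $I_0$ corresponding to the Greedy solution, and each augmentation $I \leftarrow (I \setminus N(T_C)) \cup T_C$ is performed only when $T_C$ meets that algorithm's improvement criterion, which as stated in Section~\ref{sec-app-algs} requires $T_C$ to improve the weight of $I$. Therefore the total weight $w(I)$ is non-decreasing across iterations, and the returned independent set $I^{*}$ satisfies $w(I^{*}) \geq w(I_0) = |P(\Gamma_{\text{Greedy}})| \geq \tfrac{1}{2}|P(OPT)|$. The corresponding matching $\Gamma^{*}$ covers $|P(\Gamma^{*})| = w(I^{*})$ passengers, giving the claimed $2$-approximation for all four algorithms.

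The main delicacy I anticipate is pinning down the Greedy/ImpGreedy equivalence cleanly: the two algorithms use ostensibly different selection criteria, and the equivalence depends on the aggressive neighbor-removal (resp.\ $E_e$-removal) step that guarantees $|J_{e}\setminus P(\Gamma)| = |J_e|$ on the surviving edge set. Once this identity is stated explicitly, the rest of the argument is simply a monotonicity remark about local-search initialized from Greedy, and no further combinatorial work is needed.
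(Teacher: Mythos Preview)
Your proposal is correct and matches the paper's own argument, which is simply the one-sentence remark preceding the corollary: ImpGreedy is a simplified version of Greedy (so Theorem~\ref{theorem-ImpGreedy} transfers), and AnyImp, BestImp, SquareImp all start from the Greedy solution and only perform weight-improving augmentations. You have supplied the details the paper leaves implicit, in particular the clean justification that after removing $E_e$ every surviving edge $e'$ has $|J_{e'}\setminus P(\Gamma)|=|J_{e'}|$, which is exactly what makes the Greedy and ImpGreedy selection rules coincide.
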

Since ImpGreedy finds a solution directly on $H(D,R,E)$ without converting it to $G(V,E)$ and solving the independent set problem of $G(V,E)$, it is more time and space efficient than the algorithms for MWSP.
In the rest of this paper, Algorithm 3 is referred to as ImpGreedy.

\section{Numerical experiments} \label{sec-experiment}
We create a simulation environment consists of a centralized system that integrates public transit and ridesharing.
We implement our proposed approximation algorithm (ImpGreedy) and Greedy, AnyImp and BestImp algorithms for the $k$-set packing problem to evaluate the benefits of having an integrated transportation system supporting public transit and ridesharing.
The exact algorithm, ILP (\ref{obj-1})-(\ref{constraint-2}), is not evaluated because it takes too long to complete for the instances in our study.
The results of SquareImp are not discussed because its performance is same as AnyImp; this is due to the implementation of the search/enumeration order of the vertices and edges in the independent set instance $G(V,E)$ being fixed, and each vertex in $V(G)$ has integer weight.
We use a simplified transit network of Chicago to simulate the public transit and ridesharing.

\subsection{Description and characteristics of datasets}
We built a simplified transit network of Chicago to simulate practical scenarios of public transit and ridesharing.
The roadmap data of Chicago is retrieved from OpenStreetMap\footnote{Planet OSM. \url{https://planet.osm.org}}.
We used the GraphHopper\footnote{GraphHopper 1.0. \url{https://www.graphhopper.com}} library to construct the logical graph data structure of the roadmap.
The Chicago city is divided into 77 officially community areas, each of which is assigned an area code.
We examined two different dataset in Chicago to reveal some basic traffic pattern (the datasets are provided by the Chicago Data Portal (CDP) and Chicago Transit Authority (CTA)\footnote{CDP. \url{https://data.cityofchicago.org}. CTA. \url{https://www.transitchicago.com}}, maintained by the City of Chicago).
The first dataset is bus and rail ridership, which shows the monthly averages and monthly totals for all CTA bus routes and train station entries. We denote this dataset as \textit{PTR, public transit ridership}.
The PTR dataset range is chosen from June 1st, 2019 to June 30th, 2019.
The second dataset is rideshare trips reported by Transportation Network Providers (sometimes called rideshare companies) to the City of Chicago. We denote this dataset as \textit{TNP}.
The TNP dataset range is chosen from June 3rd, 2019 to June 30th, 2019, total of 4 weeks of data.
Table~\ref{table-PTRdata} and Table~\ref{table-TNPdata} show some basic stats of both datasets.
\begin{table}[htbp]
\small
\captionsetup{font=small}
\parbox{.5\linewidth}{
\centering
\begin{tabular}{| p{3.7cm} | p{3.3cm} |}
\hline
Total Bus Ridership & 20,300,416  \\
Total Rail Ridership & 19,282,992 \\ \hline
12 busiest bus routes & 3, 4, 8, 9, 22, 49, 53, 66, 77, 79, 82, 151 \\ \hline
The busiest bus routes selected & 4, 9, 49, 53, 77, 79, 82 \\
\hline
\end{tabular}
\caption{Basic stats of the PTR dataset \label{table-PTRdata}}
}
\hfill
\parbox{.49\linewidth}{
\centering
\begin{tabular}{| p{4.3cm} | p{2.8cm} |}
\hline
\# of original records & 8,820,037 \\ \hline 
\# of records considered & 7,427,716 \\ 
\# of shared trips & 1,015,329 \\ 
\# of non-shared trips & 6,412,387 \\ \hline
The most visited community areas selected & 1, 4, 5, 7, 22, 23, 25, 32, 41, 64, 76 \\
\hline
\end{tabular}
\caption{Basic stats of the TNP dataset \label{table-TNPdata}}
}
\end{table}

In the PTR dataset, the total ridership for each bus route is recorded; there are 127 bus routes in the dataset.
We examined the 12 busiest bus routes based on the total ridership and 
selected 7 out of the 12 routes as listed in Table~\ref{table-PTRdata} to build the transit network (excluded bus routes either serve a small community or too close to train stations). 
We also selected all the major trains/metro lines within the Chicago area except the Brown Line and Purple Line since they are too close to the Red and Blue lines.
Note that the PTR dataset also provides the total rail ridership. However, it only provides the number of riders entering every station in each day; it does not provide the number of riders exit from a station nor the time related to the entries.

Each record in the TNP dataset describes a passenger trip served by a driver who provides the rideshare service;
a trip record consists of the pick-up and drop-off time and the pick-up and drop-off community area of the trip, and exact locations are provided sometimes.
We removed records where the pick-up or drop-off community area is hidden for privacy reason or not within Chicago, which results in 7.4 million ridesharing trips.
We calculated the average number of trips per day departed from/arrived at each area.
\begin{figure}[b]
\centering
\includegraphics[width=\textwidth]{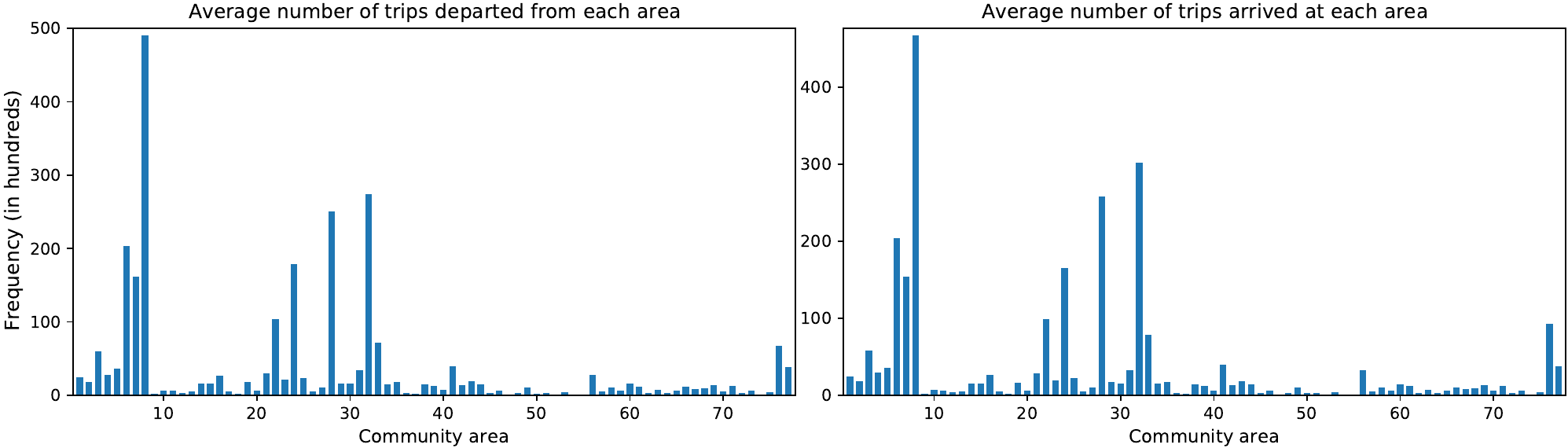}
\captionsetup{font=small}
\caption{The average number of trips per day departed from and arrived at each area.}
\label{fig-OD-pairs}
\end{figure}
The results are plotted in Figure~\ref{fig-OD-pairs}; the community areas that have the highest number of departure trips are almost the same as that of the arrival trips.

We selected 11 of the 20 most visited areas as listed in Table~\ref{table-TNPdata} (area 32 is Chicago downtown, areas 64 and 76 are airports) to build the transit network for our simulation.
From the selected bus routes, trains and community areas, we create a simplified public transit network connecting urban community areas, depicted in Figure~\ref{fig-transit-network}.
\begin{figure}[!tp]
\centering
\includegraphics[width=\textwidth]{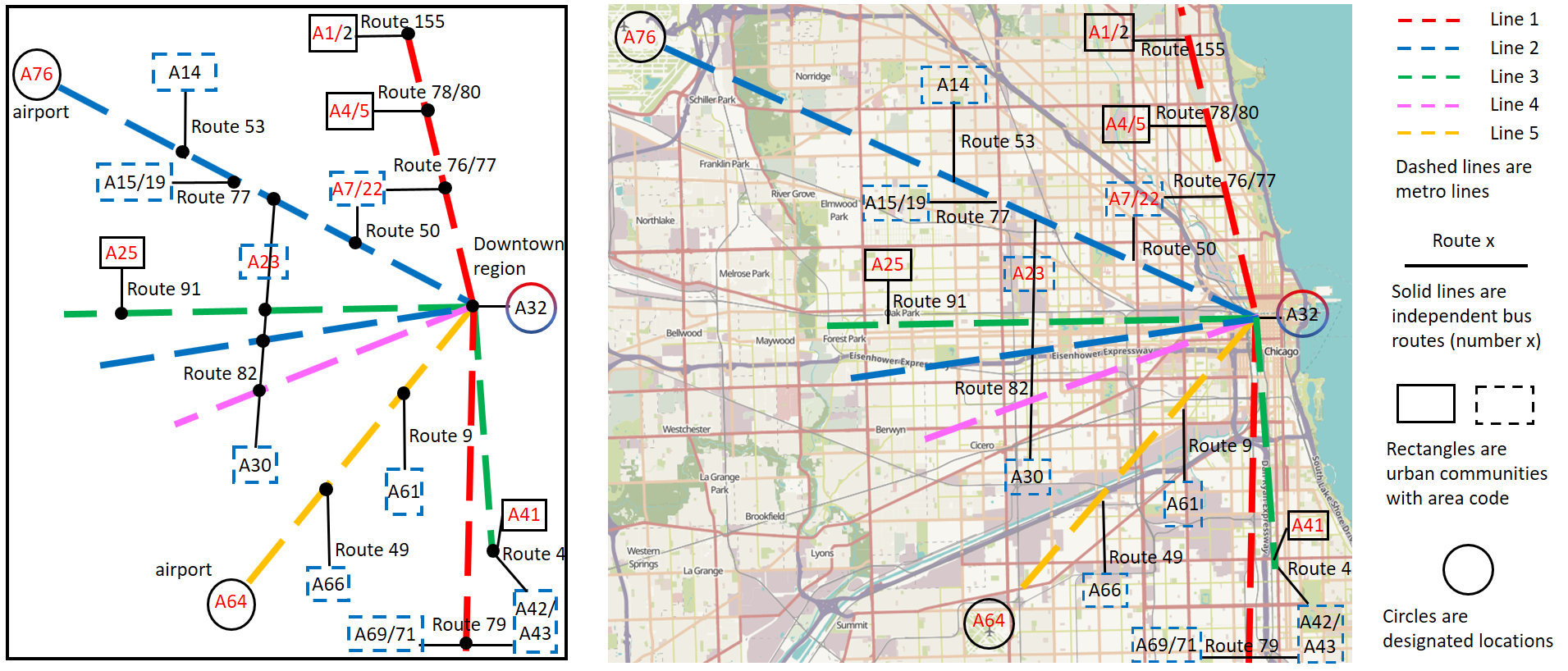}
\captionsetup{font=small}
\caption{Simplified public transit network of Chicago with 13 urban communities and 3 designated locations. Figure on the right has the Chicago city map overlay for scale.}
\label{fig-transit-network}
\end{figure}
Each rectangle on the figure represents an urban community within one community area or across two community areas, labeled in the rectangle.
The blue dashed rectangles/urban communities are chosen due to the busiest bus routes from the PTR dataset.
The rectangles/urban communities labeled with red area codes are chosen due to the most visited community areas from the TNP dataset.
The dashed lines are the trains, which resemble the major train services in Chicago. The solid lines are the selected bus routes connecting the urban communities to their closest train stations.
There are also three designated locations/destinations that many people want to travel to/from throughout the day; they are the two airports and downtown region in Chicago.

The travel time between two locations (each location consists of the latitude and longitude coordinates) uses the fastest/shortest route computed by the GraphHopper library, which is based on personal cars.
The shortest paths are \textbf{computed in real-time}, unlike many previous simulations where the shortest paths are precomputed and stored.
As mentioned in Section~\ref{subsection-alg-feas-single}, waiting time and service time are considered in a simplified model; we multiply a small constant $\epsilon > 1$ to the fastest route to mimic waiting time and service time for public transit.
For instance, consider two consecutive metro stations $s_1$ and $s_2$. The travel time $t(s_1,s_2)$ is computed by the fastest route, and the travel time by train between from $s_1$ to $s_2$ is $\hat{t}(s_1,s_2) = 1.15 \cdot t(s_1,s_2)$. The constant $\epsilon$ for bus service is 2.
Rider trips originated from most locations must take a bus to reach a metro station when ridesharing service is not involved.

\subsection{Generating instances}\label{sec-instances}
In our simulation, we partition each day from 6:00 to 23:59 into 72 time intervals (each has 15 minutes), and we only focus on weekdays.
To see ridesharing traffic pattern, we calculated the average number of served trips per hour for each day of the week using the TNP dataset.
The dashed (orange) line and solid (blue) line of the plot in Figure~(\ref{fig-sub-originalTrips}) represent shared trips and non-shared trips respectively.
A set of trips are called \emph{shared trips} if this set of trips are matched for the same vehicle consecutively such that their trips may potentially overlap, namely, one or more passengers are in the same vehicle.
For all other trips, we call them \textit{non-shared trips}.
From the plot, the peak hours are between 7:00 AM to 9:00AM and 4:00PM to 7:00PM on weekdays for both non-shared and shared trips.
The number of trips generated for each interval is plotted in Figure~(\ref{fig-sub-nTrips}), which is a scaled down and smoothed version of the TNP dataset for weekdays.
The ratio between the number of drivers and riders generated is roughly 1:3 (1 driver and 3 riders) for each interval.
Such a ratio is chosen because it should reflect the system's potential as capacity of 3 is common for most vehicles.
\begin{figure}[!ht]
\captionsetup{font=small}
\centering
\begin{subfigure}{.62\textwidth}
  \centering
  \includegraphics[width=1.0\linewidth]{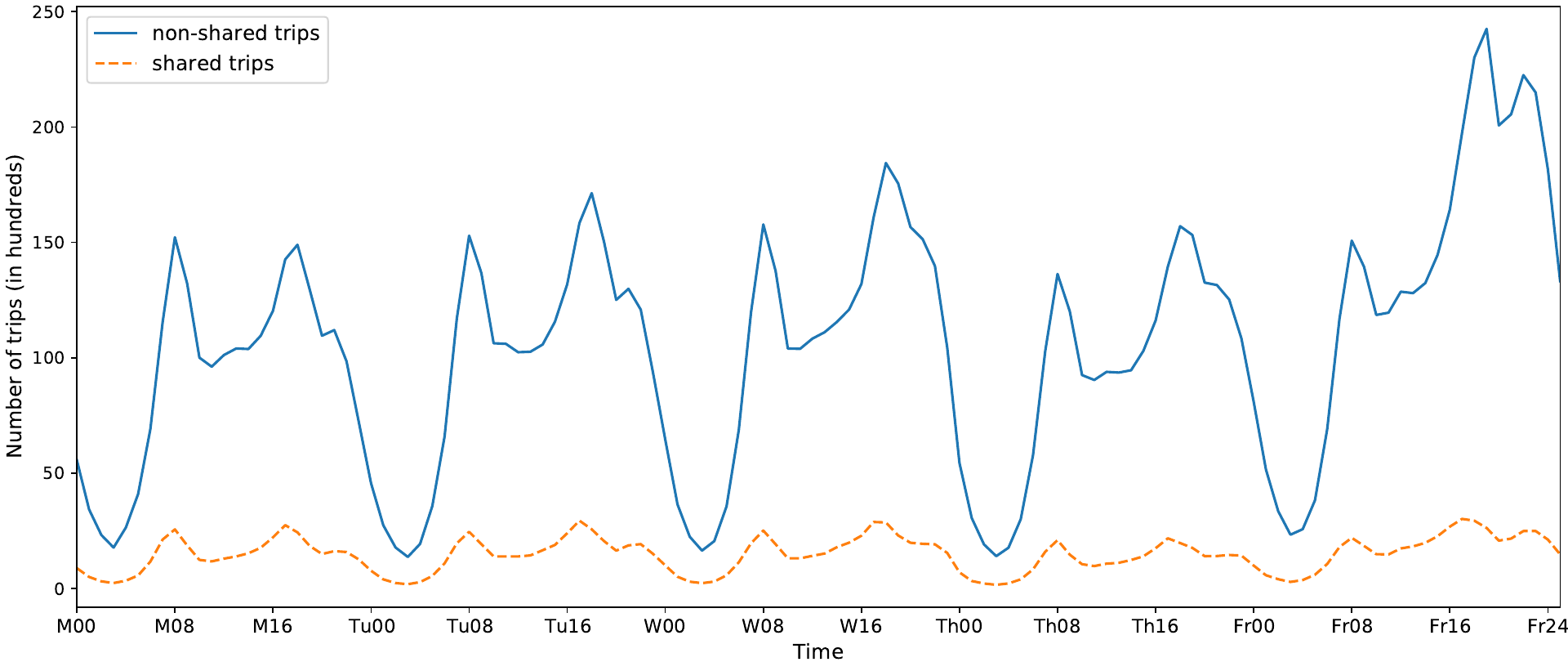}
  \caption{Average numbers of shared and non-shared trips in TNP dataset.}
  \label{fig-sub-originalTrips}
\end{subfigure}%
\hfill
\begin{subfigure}{.37\textwidth}
  \centering
  \includegraphics[width=0.98\linewidth]{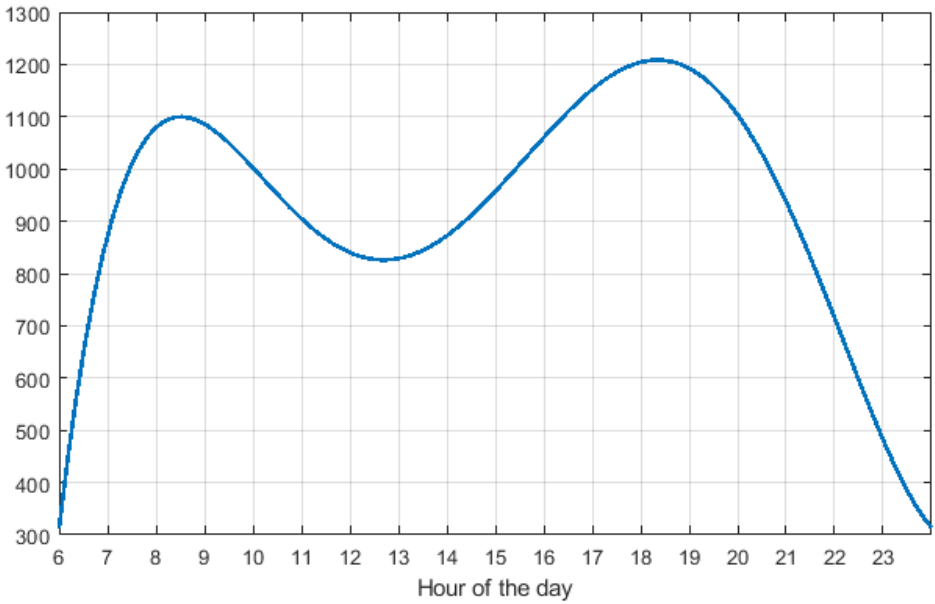}
  \caption{Total number of driver and rider trips generated for each time interval.}
  \label{fig-sub-nTrips}
\end{subfigure}
\caption{Plots for the number of trips for every hour from data and generated.}
\label{fig-nTrips-plot}
\vspace*{-1mm}
\end{figure}
For each time interval, we first generate a set $R$ of riders and then a set $D$ of drivers.
We do not generate a trip where its origin and destination are close. For example, no trip with origin Area25 and destination Area15 is generated.

\paragraph{Generation of rider trips.}
We assume that the numbers of riders entering and exiting a station are the same.
Next we assume that the the numbers of riders in PTR over the time intervals each day follow a similar distribution of the TNP trips over the time intervals.
Each day is divided into 6 different consecutive time periods (each consists of multiple time intervals):
morning rush, morning normal, noon, afternoon normal, afternoon rush, and evening time periods.
Each time period determines the probability and distribution of origins and destinations.
Based on the PTR dataset and Rail Capacity Study by CTA~\cite{CTA19}, many riders are going into downtown in the morning and leaving downtown in the afternoon.
To generate a rider trip $j$ during \textbf{morning rush} time period, we first decide a \emph{pickup area} which is a community area selected uniformly at random. The origin $o_j$ is a random point within the selected pickup area.
Then, we use standard normal distribution to determine the \emph{dropoff area}, where downtown area is within two SDs (standard deviations), airports are more than two and at most three SDs, and the community areas are more than three SDs away from the mean.
The destination $d_j$ is a random point within the selected dropoff area.
The above is repeated until $a_t$ riders are generated, where $a_t + a_t / 3$ (riders + drivers) is the total number of trips for time interval $t$ shown in Figure~(\ref{fig-sub-nTrips}).
For any pickup area $c$, let $c_t$ be the number of generated riders originated from $c$ for time interval $t$, that is, $\sum_c c_t = a_t$.
Other time periods follow the same procedure, and all community areas and locations can be selected as pickup and dropoff areas
\begin{enumerate}
\setlength\itemsep{0em}
\item \textbf{Morning normal}: for pickup area, community areas are within two SDs, downtown is more than two and at most three SDs and airports are more than three SDs away from the mean; and destination area is selected using uniform distribution.
\item \textbf{Noon}: both pickup and dropoff are selected using uniform distribution.
\item \textbf{Afternoon normal}: for pickup area, downtown and airport are within two SDs and community areas are more than two SDs away from the mean; for dropoff area, community areas are within two SDs and downtown and airports are more than two SDs away from the mean.
\item \textbf{Afternoon rush}: for pickup area, downtown is within two SDs, airports are more than two SDs and at most three SDs and community areas are more than three SDs away from the mean; and for dropoff area, community areas are within two SDs, airports are more than two SDs and at most three SDs and downtown is more than three SDs away from the mean.
\item \textbf{Evening}: for both pickup and dropoff areas, community areas are within two SDs, downtown is more than two and at most three SDs and airports are more than three SDs away from the mean.
\end{enumerate}

\paragraph{Generation of driver trips.}
We examined the TNP dataset to determine if there are enough drivers who can provide ridesharing service to riders that follow match Types 1 and 2 traffic pattern.
First, we removed any trip from TNP if it is too short (less than 15 minutes or origin and destination are adjacent areas).
We calculated the average number of trips per hour originated from every pre-defined area in the transit network (Figure~\ref{fig-transit-network}), and then plotted the destinations of such trips in a grid heatmap.
In other words, each cell $(c,r)$ in the heatmap represents the the average number of trips per hour originated from area $c$ to destination area $r$ in the transit network (Figure~\ref{fig-transit-network}).
An example is depicted in Figure~\ref{fig-OD-distribution}.
\begin{figure}[!htbp]
\centering
\includegraphics[width=\textwidth]{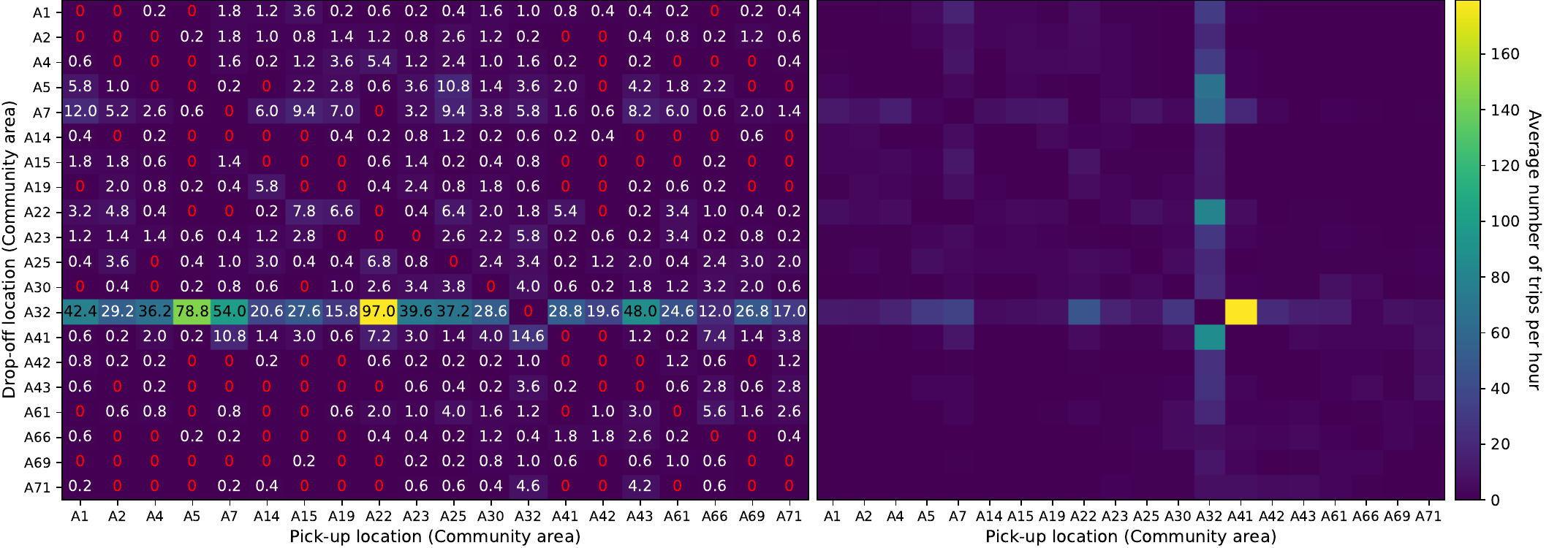}
\caption{Traffic heatmaps for the average number of trips originated from one area (x-axis) during hour 7:00 (left) and hour 17:00 (right) to every other destination area (y-axis).}
\label{fig-OD-distribution}
\end{figure}
From the heatmaps, many trips are going into the downtown area (A32) in the morning; and as time progresses, more and more trips leave downtown. This traffic pattern confirms that there are enough drivers to serve the riders in our simulation.
The number of shared trips shown in Figure~\ref{fig-sub-originalTrips} also suggests that many riders are willing to share a same vehicle.
We slightly reduce the difference between the values of each cell in the heatmaps and use the idea of marginal probability to generate driver trips.
Let $d(c,r,h)$ be the value at the cell $(c,r)$ for origin area $c$, destination $r$ and hour $h$.
Let $P(c,h)$ be sum of the average number of trips originated from area $c$ for hour $h$ (the column for area $c$ in the heatmap corresponds to hour $h$), that is, $P(c, h) = \sum_{r} d(c,r,h)$ is the sum of the values of the whole column $c$ for hour $h$.
Given a time interval $t$, for each area $c$, we generate $c_t/3$ drivers ($c_t$ is defined in Generation of rider trips) such that each driver $i$ has origin $o_i = c$ and destination $d_i = r$ with probability $d(c,r,h)/P(c,h)$, where $t$ is contained in hour $h$.
The probability of selecting an airport as destination is fixed at 5\%.

\paragraph{Deciding other parameters for each trip.}
After the origin and destination of a rider of driver trip have been determined, we decide other parameters of the trip.
The capacity $n_i$ of drivers' vehicles is selected from three ranges: the {\em low range} [1,2,3], {\em mid range} [3,4,5], and {\em high range} [4,5,6].
During morning/afternoon peak hours, roughly 95\% and 5\% of vehicles have capacities randomly selected from the low range and mid range respectively.
It is realistic to assume vehicle capacity is lower for morning and afternoon peak-hour commute.
While during off-peak hours, roughly 80\%, 10\% and 10\% of vehicles have capacities randomly selected from low range, mid range and high range respectively.
The number $\delta_i$ of stops equals to $n_i$ if $n_i \leq 3$, else it is chosen uniformly at random from $[n_i-2, n_i]$ inclusive.
The detour limit $z_i$ of each driver is within 5 to 20 minutes because traffic is not considered, and waiting time and service time are considered in a simplified model.
The general information of the base instances is summarized in Table~\ref{table-simulation}.
\begin{table}[htbp]
\footnotesize
\centering
   \begin{tabular}{ l | p{11.3cm} }
   	\hline
   	Major trip patterns       & from urban communities to downtown and vice versa for peak and off-peak hours respectively;
                                          trips specify one match type for peak hours and can be in either type for off-peak hours \\
    \# of intervals simulated & Start from 6:00 AM to 11:59 PM; each interval is 15 minutes		\\
   	\# of trips per interval  & varies from [350, 1150] roughly, see Figure~\ref{fig-nTrips-plot}  	\\ 
   	Driver:rider ratio        & 1:3 approximately												\\ 
   	Capacity $n_i$ of vehicles 	  & low: [1,3], mid: [3,5] and high: [4,6] inclusive \\ 
   	Number $\delta_i$ of stops limit 	& $\delta_i=n_i$ if $n_i \leq 3$, or $\delta_i \in [n_i-2,n_i]$ if $n_i \geq 4$		\\
    Earliest departure time $\alpha_i$    & immediate to 2 intervals after a trip announcement is generated  \\
    Driver detour limit $z_i$      & 5 minutes to min\{$2 \cdot t(o_i,d_i)$ (driver's fastest route), 20 minutes\}	\\
   	Latest arrival time $\beta_i$  & at most $1.5 \cdot (t(o_i,d_i) + z_i) + \alpha_i$                       \\
    Travel duration $\gamma_i$ of driver $i$  & $\gamma_i = t(o_i,d_i) + z_i$   		\\
    Travel duration $\gamma_j$ of rider $j$  & $\gamma_j = t(\hat{\pi}_i)$, where $\hat{\pi}_i$ is the fastest public transit route    		\\ 
    Acceptance rate      & 80\% for all riders	(0.8 times the fastest public transit route)           	\\   
    Train and bus travel time   & average at 1.15 and 2 times the fastest route by car, respectively \\ \hline
   \end{tabular}
\captionsetup{font=small}
\caption{General information of the base instances.}
\label{table-simulation}
\end{table}

\paragraph{Reduction configuration procedure.}
When the number of trips increases, the running time for Algorithm~2 and the time needed to construct the $k$-set packing instance also increase. This is due to the increased number of feasible matches for each driver $i \in D$.
In a practical setup, we may restrict the number of feasible matches a driver can have.
Each match produced by Algorithm~1 is called a \emph{base match}, which consists of exactly one driver and one passenger.
To make the simulation feasible, we heuristically limit the numbers of base matches for each driver and each rider and the number of total feasible matches for each driver. We use $(x\%, y, z)$, called \emph{reduction configuration} (\emph{Config} for short), to denote that for each driver $i$, the number of base matches of $i$ is reduced to $x$ percentage and at most $y$ total feasible matches are computed for $i$; and for each rider $j$, at most $z$ base matches containing $j$ are used.

After Algorithm~1 is completed. A reduction procedure may be evoked with respect to a reduction Config.
Let $H(D,R,E)$ be the graph after computing all feasible base matches (instance computed by Algorithm~1 and before Algorithm~2 is executed).
For a trip $i \in \mathcal{A}$, let $E_i$ be the set of base matches of $i$.
The reduction procedure works as follows.
\begin{itemize}
\setlength\itemsep{0em}
\item First of all, the set of drivers is sorted, based on number of base matches each driver has, in descending order.
\item Each driver $i$ is then processed one by one.
    \begin{enumerate}
    \item If driver $i$ has at least 10 base matches, then $E_i$ is sorted, based on the number of base matches each passenger included in $E_i$ has, in descending order.
    \item For each match $e=(i, J=\{j\})$ in $E_i$, if $j$ belongs to more than $z$ other matches, remove $e$ from $E_i$.
    \item After above step 2, if $E_i$ has not been reduced to $x\%$, sort the remaining matches of $E_i$, based on the travel time of passengers  included in $E_i$ to $i$, in descending order.
    \item Remove the first $x'$ matches from $E_i$ until $x\%$ is reached.
    \end{enumerate}
\end{itemize}
The original sorting of the drivers allows us to first remove matches from drivers that have more matches than others.
The sorting of the base matches of driver $i$ in step 1 allows us to first remove matches containing passengers that also belong to other matches.
Passengers farther away from a driver $i$ may have lower chance to be served together by $i$; this is the reason for the sorting in step 3.

\subsection{Computational results}
We use the same transit network and same set of generated trip data for all algorithms.
All experiments were implemented in Java and conducted on Intel Core i7-2600 processor with 1333 MHz of 8 GB RAM available to JVM.
Since the optimization goal is to assign accepted ridesharing route to as many riders as possible, the performance measure is focused on the number of riders served by ridesharing routes, followed by the total time saved for the riders as a whole.
We record both of these numbers for each approximation algorithm.
The base case instance uses the parameter setting described in Section~\ref{sec-instances} and Config (30\%, 600, 20).
The experiment results are shown in Table~\ref{table-base-result}.
\begin{table}[htbp]
\footnotesize
\centering
   \begin{tabular}{ l | c | c | c | c }
   	\hline
                                                                    & ImpGreedy 	& Greedy       & AnyImp       & BestImp     \\ \hline
    Total number of riders served                    & 27413  	        & 27413         & 28248          & 28258         \\    
    Avg number of riders served per interval    & 380.736  	    & 380.736      & 392.333       & 392.472       \\
   Total time saved of all riders (minute)	            & 354568.2  	    & 354568.2    & 365860.6     & 365945.8     \\
    Avg time saved of riders per interval (minute)	& 4924.56        & 4924.56      & 5,081.40      & 5082.58   \\ \hline
    \multicolumn{2}{| l |}{Total number of riders and public transit duration} & \multicolumn{3}{l |}{45314 and 1383743.97 minutes}\\ \hline
   \end{tabular}
\captionsetup{font=small}
\caption{Base case solution comparison between the approximation algorithms.}
\label{table-base-result}
\end{table}
The results of ImpGreedy and Greedy are aligned since they are essentially the same algorithm - 60.5\% of total passengers are assigned ridesharing routes and 25.6\% of total time are saved.
The results of AnyImp and BestImp are similar because of the density of the graph $G(V,E)$ due to Observation~\ref{obs-1}.
For AnyImp and BestImp, roughly 62.4\% of total passengers are assigned ridesharing routes and 26.4\% of total time are saved.
On average, passengers are able to reduce their travel duration from 30.5 minutes to 22.5 minutes by using public transit plus ridesharing.
The results of these four algorithms are not too far apart.
However, it takes too long for AnyImp and BestImp to run to completion.
A 10-second limit is set for both algorithms in each iteration for finding an independent set improvement.
With this time limit, AnyImp and BestImp run to completion within 15 minutes for almost all intervals.

We also examine this from the drivers' perspective; we recorded both the mean occupancy rate and vacancy rate of drivers.
The mean occupancy rate is calculated as, in each interval, the number of passengers served divided by the number of drivers who serve them.
The mean vacancy rate is calculated as, in each interval, the number of drivers with feasible matches who are not assigned any passenger divided by the total number of drivers with at least one feasible match.
The results are depicted in Figure~\ref{fig-OR-VR}.
\begin{figure}[!ht]
\centering
\includegraphics[width=\textwidth]{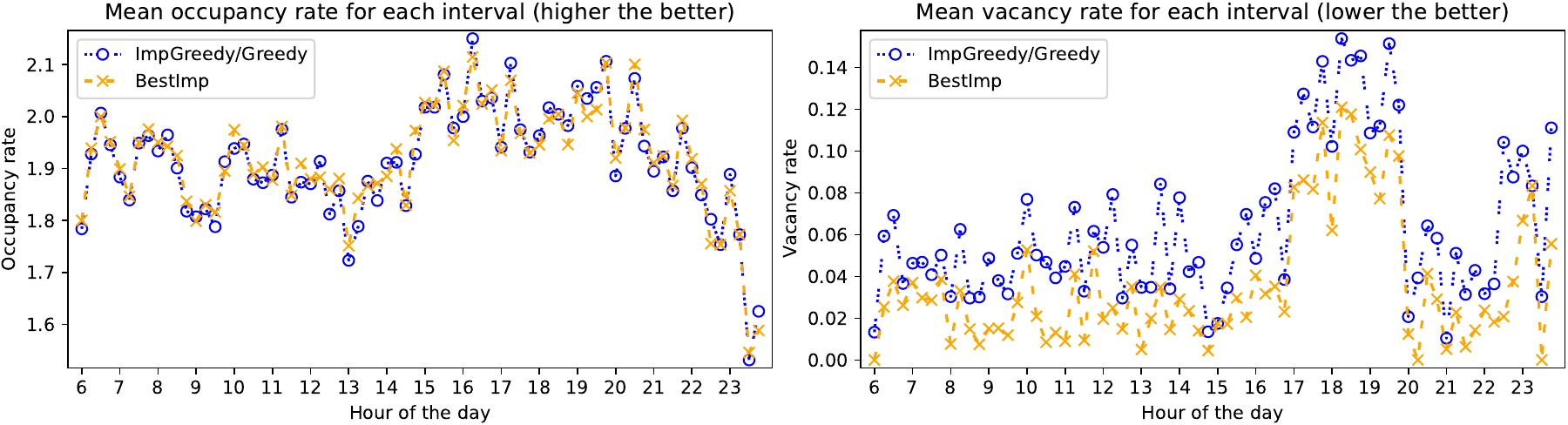}
\captionsetup{font=small}
\caption{The mean occupancy rate and vacancy rate of drivers for each interval.}
\label{fig-OR-VR}
\end{figure}
The occupancy rate results show that in many intervals, 1.9-2 passengers are served by each driver on average.
The vacancy rate of drivers show that 3-8\% (0-4\% resp.) of drivers are not assigned any passenger while such drivers have some feasible matches  for ImpGreedy (BestImp respectively) during all hours except afternoon peak hours; on the other hand, this time period has the highest occupancy rate.
This is most likely due to the origins of many trips are from the same area (downtown). If the destinations of drivers and riders do not have the same general direction from downtown, the drivers may not be able to serve any riders. On the other hand, when their destinations are aligned, drivers are likely to serve more riders.

Another major component of the experiment is to measure the computational time of the algorithms, which is highly affected by the base match reduction configurations.
By reducing more matches, we are able to improve the running time of AnyImp and BestImp significantly, but sacrifice performance slightly.
We tested 12 different Configs:
\begin{itemize}[leftmargin=*]
\setlength\itemsep{0em}
\begin{footnotesize}
\item \textit{Small1} (20\%,300,10), \textit{Small2} (20\%,600,10), \textit{Small3} (20\%,300,20), \textit{Small4-10} (20\%,600,20).
\item \textit{Medium1} (30\%,300,10), \textit{Medium2} (30\%,600,10), \textit{Medium3} (30\%,300,20), \textit{Medium4-10} (30\%,600,20).
\item\textit{Large1} (40\%,300,10), \textit{Large2} (40\%,600,10), \textit{Large3-10} (40\%,300,20), and \textit{Large4-10} (40\%,600,20).
\end{footnotesize}
\end{itemize}
Configs with label ``-10'' have a 10-second limit to find an independent set improvement, and all other Configs have 20-second limit.
Notice that all 12 Configs have the same sets of driver/rider trips and base match sets but generate different feasible match sets.
The performance and running time results of all 12 Configs are depicted in Figures~\ref{fig-configurations-perf}~and~\ref{fig-configurations-time} respectively.
The results are divided into peak and off-peak hours for each Config (averaging all intervals of peak hours and off-peak hours).
The running time of ImpGreedy and Greedy are within seconds for all Configs as shown in Figure~\ref{fig-configurations-time}.
On the other hand, it may not be practical to use AnyImp and BestImp for peak hours since they require around 15 minutes for most Configs.
Since AnyImp and BestImp provide better performance than ImpGreedy/Greedy when each Config is compared side-by-side, one can use ImpGreedy/Greedy for peak hours and AnyImp/BestImp for off-peak hours so that it becomes practical.
\begin{figure}[!t]
\centering
\includegraphics[width=\textwidth]{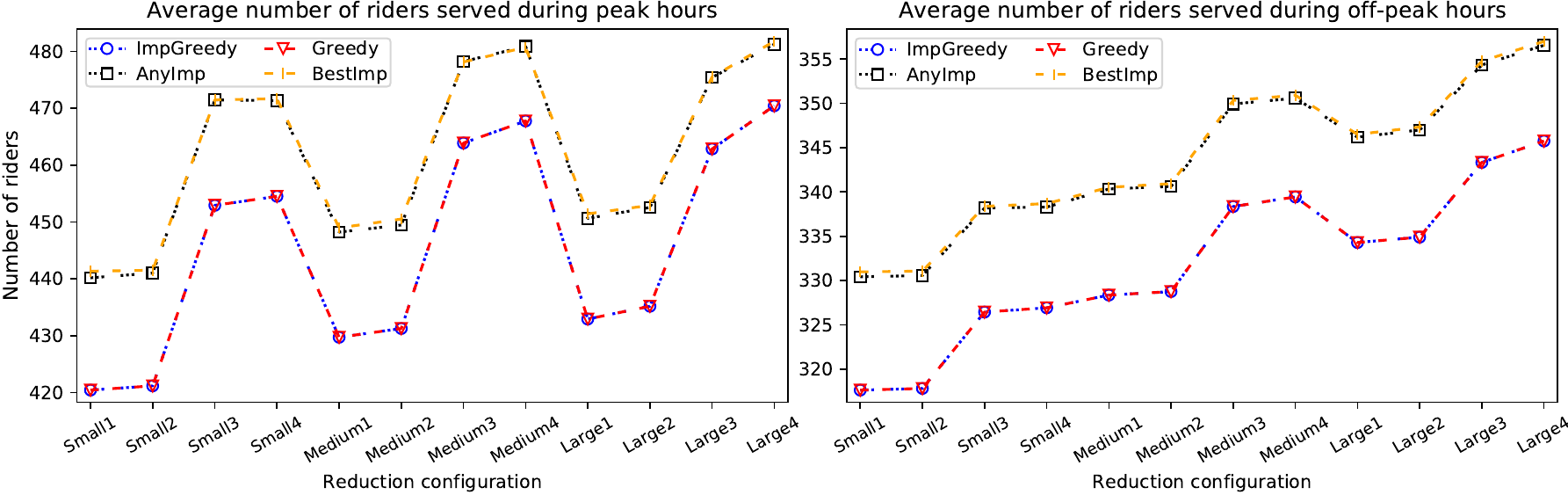}
\captionsetup{font=small}
\caption{Average performance of peak and off-peak hours for different configurations.}
\label{fig-configurations-perf}
\end{figure}
\begin{figure}[!t]
\centering
\includegraphics[width=\textwidth]{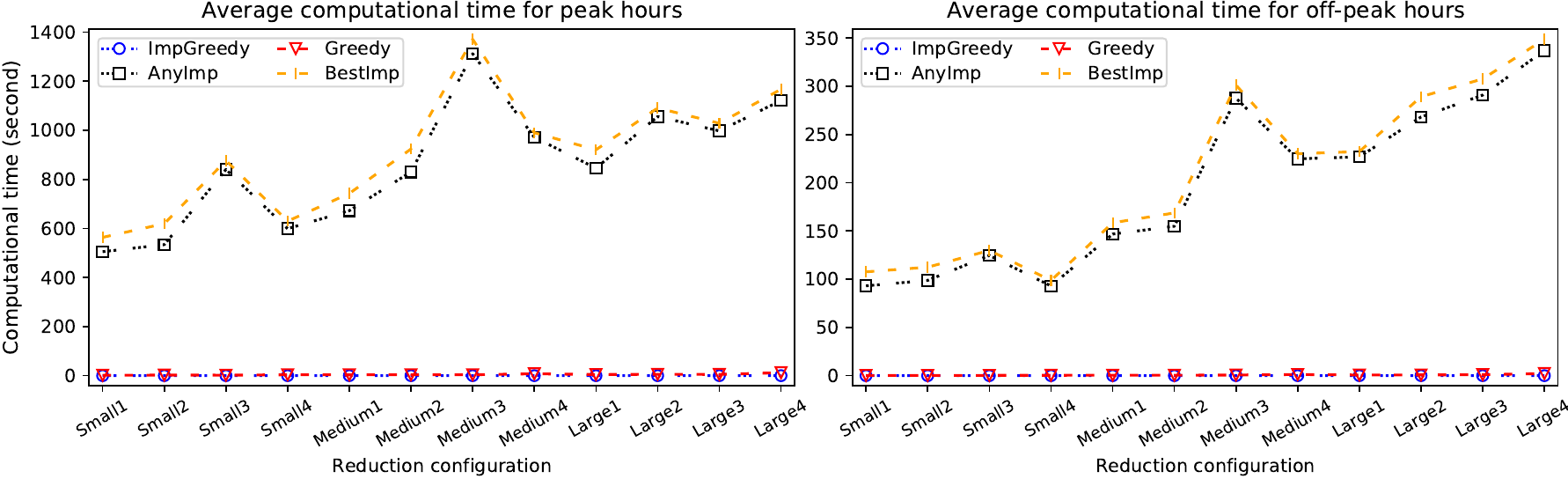}
\captionsetup{font=small}
\caption{Average running time of peak and off-peak hours for different configurations.}
\label{fig-configurations-time}
\end{figure}
The increase in performance from Small1 to Small3 is much larger than that from Small1 to Small2 (same for Medium and Large), implying any parameter in a Config should not be too small.
The increase in performance from Large1 to Large4 is higher than that from Medium1 to Medium4 (similarly for Small).
Therefore, a balanced configuration is more important than a configuration emphasizes only one or two parameters.

Because ImpGreedy does not create the independent set instance, it runs quicker than Greedy. More importantly, ImpGreedy uses less memory space than Greedy does.
We tested ImpGreedy and Greedy with the following Configs: \textit{Huge1} (100\%,600,10), \textit{Huge2} (100\%,2500,20) and \textit{Huge3} (100\%,10000,30)  (these Configs have the same sets of driver/rider trips and base match sets as those in the previous 12 Configs). The focus of these Configs is to see if Greedy can handle large number of feasible matches.
The results are shown in Table~\ref{table-hugeConfig}.
\begin{table}[htbp!]
\footnotesize
\centering
\begin{tabular}{ l | c | c | c }
\hline
	\textbf{ImpGreedy}                                                       & Huge1            & Huge2           & Huge3        \\ \hline
	Avg running time for peak/off-peak hours (sec)               & 0.08 / 0.03      & 0.43 / 0.12     & 1.2 / 0.29        \\
	Avg number of riders served for peak/off-peak hours      & 406.9 / 339.0  & 458.8 / 355.4  & 484.1 / 361.9  \\
	Avg time saved of riders per interval (sec)                       & 284891.8       & 302774.1         & 310636.9   \\ \hline
	\textbf{Greedy}                                                             & Huge1           & Huge2             & Huge3       \\ \hline
	Avg running time                                                            & N/A               & N/A                & N/A         \\
	Avg instance size $G(V,E)$ of afternoon peak ($|E(G)|$) & 0.02 billion     & 0.38 billion      & 5.47 billion \\
	Avg time creating $G(V,E)$ of afternoon peak (sec)        & 14.6               & 320.9              & 3726.79  \\ \hline
\end{tabular}
\captionsetup{font=small}
\caption{The results of ImpGreedy and Greedy using Unlimited reduction configurations.}
\label{table-hugeConfig}
\end{table}
Greedy cannot run to completion for all configurations because in many intervals, the whole graph $G(V,E)$ of the independent set instance is too large to hold in memory.
The average number of edges for afternoon peak hours is 0.02, 0.38 and 5.47 billion for Huge1, Huge2 and Huge3 respectively.
Further, the time it takes to create $G(V,E)$ can excess practicality.
Hence, using Greedy (AnyImp and BestImp) for large instances may not be practical.
In addition, the performance of ImpGreedy with Huge3 is better than that of AnyImp/BestImp with Large4.

Lastly, we also looked at the total running times of the approximation algorithms including the time for computing feasible matches (Algorithms 1 and 2). The running time of Algorithm 1 solely depends on computing the shortest paths between the trips and stations.
Table~\ref{table-algorithms-time} shows that Algorithm 1 runs to completion within 500 seconds on average for peak hours.
As for Algorithm 2, when many trips' origins/destinations are concentrated in one area, the running time increases significantly, especially for drivers with high capacity.
Running time of Algorithm 2 can be reduced significantly by Configs with aggressive reductions.
\begin{table}[htbp]
\scriptsize
\setlength\tabcolsep{5pt} 
\centering
   \begin{tabular}{  l | c | c | c | c | c | c || c|c|c|c }
              & Alg1     & Alg2     & ImpGreedy & Greedy  & AnyImp & BestImp & \multicolumn{4}{c}{Total computational time} \\
              & & & & & &                                                                                 & ImpGreedy & Greedy & AnyImp & BestImp  \\ \hline
   {Small3}      & 485.2   & 26.8     & 0.021    & 2.0      & 840.5    & 876.4      & 512.1  & 514.1  & 1352.5  & 1388.5 \\
   {Small4}      & 485.2   & 28.2     & 0.029    & 3.6      & 599.1    & 629.9      & 513.4  & 517.0  & 1112.5  & 1143.3 \\
   {Medium3}  & 485.2   & 43.6     & 0.031    & 3.7      & 1312.1  & 1371.0    & 532.5  & 543.0  & 1840.9  & 1899.9 \\
   {Medium4}  & 485.2   & 50.1     & 0.048    & 7.7      & 971.5    & 990.0      & 535.3  & 543.0  & 1506.8  & 1525.3 \\
   {Large4}      & 485.2   & 72.0     & 0.076   & 12.2     & 1121.3  & 1167.2     & 557.3  & 569.5  & 1678.6  & 1724.4 \\
   {Huge3}      & 485.2   & 339.4    & 1.2       & N/A     & N/A      & N/A         & 825.8  & N/A    & N/A      & N/A \\
   \end{tabular}
\captionsetup{font=small}
\caption{Average computational time (in seconds) of peak hours for all algorithms.}
\label{table-algorithms-time}
\end{table}
Combining the results of this and previous (Table~\ref{table-hugeConfig}) experiments, ImpGreedy is capable of handling large instances while providing quality solution compared to other approximation algorithms.

From the experiment results in Figures~\ref{fig-configurations-perf}~and~\ref{fig-configurations-time}, it is beneficial to dynamically select different algorithms and reduction configurations for each interval depending on the number of trips.
With large problem instances, previous approximation algorithms are not efficient (time and memory consuming), so they require aggressive reduction to reduce the instance size.
On the other hand, ImpGreedy is much faster and capable of handling large instances.
The running time of ImpGreedy can also be an advantage to improve the quality of solutions. For example, as shown in Figures~\ref{fig-configurations-perf}~and~\ref{fig-configurations-time}, for the same set of drivers and riders, ImpGreedy assigns more riders when taking Meduim/Medium4 as inputs than AnyImp/BestImp on Small1/Small2, and uses less time than AnyImp/BestImp.
When the size of an instance is not small and a solution must be computed within some time-limit, ImpGreedy has a distinct advantage over the previous approximation algorithms.


\section{Conclusion and future work} \label{sec-conclusion}
Based on real-world transit datasets in Chicago, our study has shown that integrating public and private transportation can benefit the transit system as a whole, 
Recall that we focus on work commute traffic, and we only consider two match types that emphasize this transit pattern (with the flexibility to choose either type).
Just from these two types, our base case experiments show that more than 60\% of the passenger are assigned ridesharing routes and able to save 25\% of travel time.
Majority of the drivers are matched with at least one passenger, and vehicle occupancy rate has improved close to 3 (including the driver) on average.
These results suggest that ridesharing can be a complement to public transit.
Our experiments show that the whole system is capable of handling more than 1000 trip requests in real-time using ordinary computer hardware.
It is likely that the performance results of ImpGreedy can be further improved by extending it with the local search strategy of AnyImp and BestImp.
Perhaps the biggest challenge for scalability comes from computing the base matches (Algorithm 1) since it has to compute many shortest paths in real-time;
it may be worth to apply heuristics to reduce the running time of Algorithm 1 for scalability.
To better understand scalability and practicality, it is important to include different match types and a more sophisticated simulation which includes real transit schedule and transit demand.

\end{document}